\tikzstyle{AArrow} = [thick, decoration={markings,mark=at position 1 with {\arrow[semithick]{open triangle 60}}},%
\tikzstyle{AArroww} = [semithick, white,line width=1.4pt, shorten >= 4.5pt]
\theoremstyle{plain}
\newtheorem{theorem}{Theorem}[section]            
\newtheorem{proposition}[theorem]{Proposition}  
\newtheorem{corollary}[theorem]{Corollary}	      %
\theoremstyle{definition}
\newtheorem{definition}[theorem]{Definition}
\numberwithin{theorem}{section}
\numberwithin{equation}{section}
\numberwithin{figure}{section}
\newcommand{\gaction}[2]{\genfrac{}{}{0.5pt}{}{#1}{#2}%
                        \!\lower2pt\hbox{\rotatebox[origin=c]{-90}{{$\looparrowright$}}}}
\newcommand{\dotfraction}[2]{\genfrac{}{}{0.5pt}{}{#1}{#2}%
                        \!\lower.5pt\hbox{{$\circ$}}}
\renewcommand\appendix{\par
  \setcounter{section}{0}
  \setcounter{subsection}{0}
  \setcounter{figure}{0}
  \setcounter{table}{0}
  \renewcommand\thesection{Appendix \Alph{section}}
  \renewcommand\thefigure{\Alph{section}\arabic{figure}}
  \renewcommand\thetable{\Alph{section}\arabic{table}}
}
\titleformat*{\section}{\fontsize{14pt}{14pt} \bf}                   
\newcommand\myfootnote[1]{%
  \begingroup
  \renewcommand\thefootnote{}\footnote{#1}%
  \addtocounter{footnote}{-1}%
  \endgroup
}
\title{\bf Skein relations for spin networks, modified}
\author{Jerzy Kocik                  
\\ \small Department of Mathematics
\\ \small Southern Illinois University, Carbondale, IL62901
\\ \small jkocik{@}siu.edu  }
\date{}
\begin{document}

\maketitle

\begin{abstract}
\noindent
An alternative framework underlying connection between tensor ${\rm sl}_2$-calculus and spin networks is suggested.  
New sign convention for the inner product in the dual spinor space leads to a simpler 
and direct set of initial rules for the diagrammatic recoupling methods. 
Yet it preserves the standard chromatic graph evaluations. 
In contrast with the standard formulation,  the background space is that of symmetric tensor spaces, 
which seems to be in accordance with the representation theory of ${\rm SL}(2)$.  
An example of Apollonian disk packing is shown to be a source of spin networks.
The graph labeling is extended to non-integer values, resulting in the complex-values of chromatic evaluations. 
\\[3pt]
{\bf Keywords:}  Spin networks, skein relations, Apollonian disk configurations., ${\rm sl}(2)$, ${\rm su}(2)$.
\\

\noindent
{\bf MSC:} 
20C35, 
57M25, 
57M27 
.
\end{abstract}

\myfootnote{\hspace{-.27in} Appeared in {\it Journal of Knot Theory and Its Ramifications, doi.org/10.1142/S0218216518410031}}

%

\section{Introduction}

The graphical language for tensor calculus, like every language, comes in a number of dialects. 
One of its applications lies in spinor calculus and representation theory of ${\rm su}(2)$, 
and can be dated to the 50-ties of the last century \cite{BS, Lev,Lev1,Lev2,YLV,YB}. 
The next creative enhancement to this particular dialect is due to Roger Penrose together with his invention of spin networks \cite{Pe}.  
A connection with knot theory via recoupling mechanism was discovered by Louis Kauffman \cite{Ka}. 
The current revival of spin networks is due to their essential role in the loop quantum gravitation project
\cite{Ro}.  See also \cite{KLo}.
~ 

The essence of spin networks is the ``chromatic evaluation'' of certain labeled graphs, motivated by representation theory of ${\rm SU}(2)$.
A particular derivation of this graphical method was popularized in \cite{Pe, Ro, Ma, ba}. 
It however seems to have a number of ad hoc additional rules introduced to fix some sign problems. 
In Section 3 we show how to repair the problem by fixing the inner product in the dual spinor space.  
This alternative ``axiomatization'' seem to be more natural and pedagogically friendly.  
Despite different intermediate rules, the chromatic numerical evaluations coincide with the standard ones.

In Section 4, we introduce chromatic evaluations of circle arrangements, including Apollonian circle packings,
by transforming them into spin networks.  

Clearly, the formalism concerns both groups ${\rm SU}(2)$ and ${\rm SL}(2)$ and their Lie algebras, as they are coincide under complexification.

\section{Two dialects of diagrammatic language}

In this section we describe two dialects of diagrammatic language. 
The first,  {\bf arrow-tail dialect}, may be considered universal  
as it applies to any tensor system and does not introduce any shortcuts.  
The second, Penrose's special dialect, is popularly used for ${\rm SL}_2$ diagrammatics.  Here we present it as in \cite{Pe, Ma}.  
A new revised version is presented in Section \ref{sec:3}. 

~ 

\noindent { \bf A. Arrow-tail dialect.} 
This is the most robust yet the most flexible and universal dialect to which other may be translated.  
The idea is to be unambiguous with the possible price of loss of elegance. 
Tensors are represented by blocks, which may be labeled by the symbols of the tensors.  
The contravariant and the covariant entries are represented by arrows and tails, respectively.  Here are examples:  

\smallskip
\includegraphics[scale=.9]{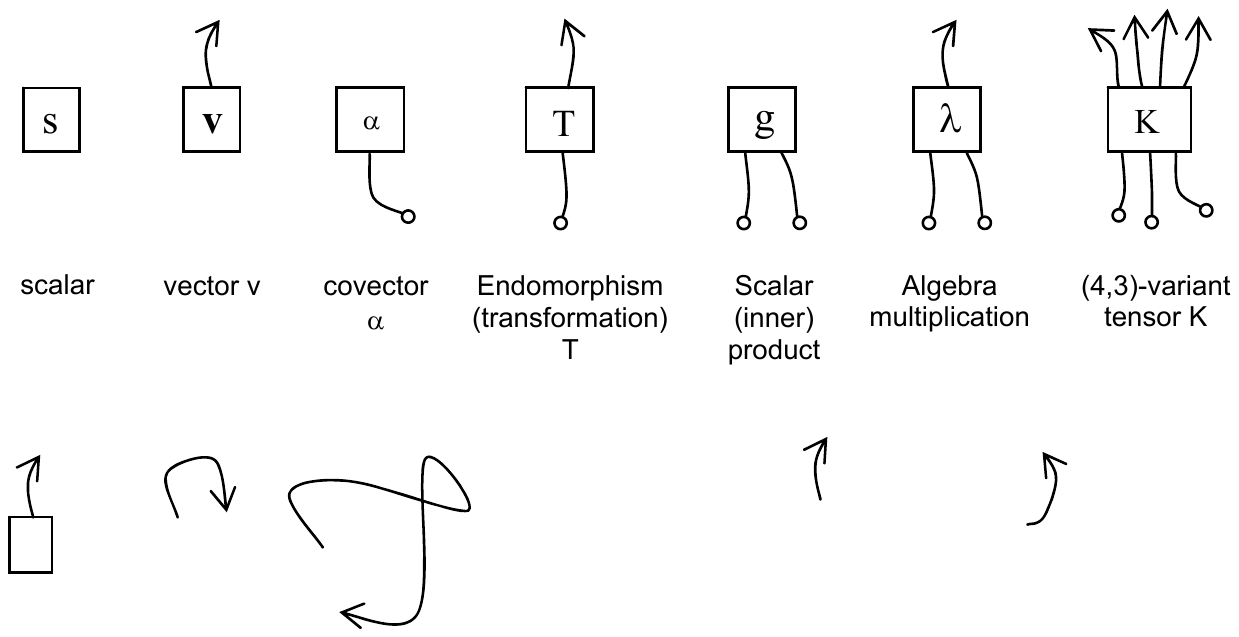}

\noindent Na\"ively, one may think of the arrows and tails as indices in the basis description.  
An arrow represents a contravariant (upper) index while the tail the covariant (lower index). 
The shape and the place of attachment of the arrows and tails is inessential, for instance:

\smallskip
\includegraphics[scale=.9]{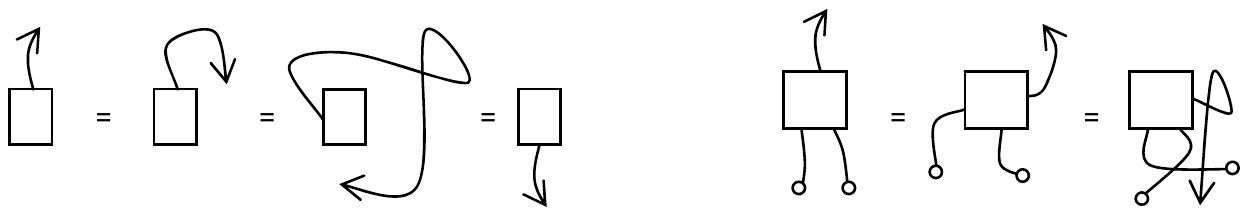}

\noindent However the order the arrows exit the block matters:  
the contravariant arrows are oriented clockwise and the covariant counterclockwise.  
If you want to change the order of two (or more) indices, the crossing like in the last figure is not sufficient in this convention.  
One needs to do it by applying an appropriate tensor.  
For simplicity, we shall denote it by a rectangle or circle with the crossing as shown:

\smallskip
\includegraphics[scale=.9]{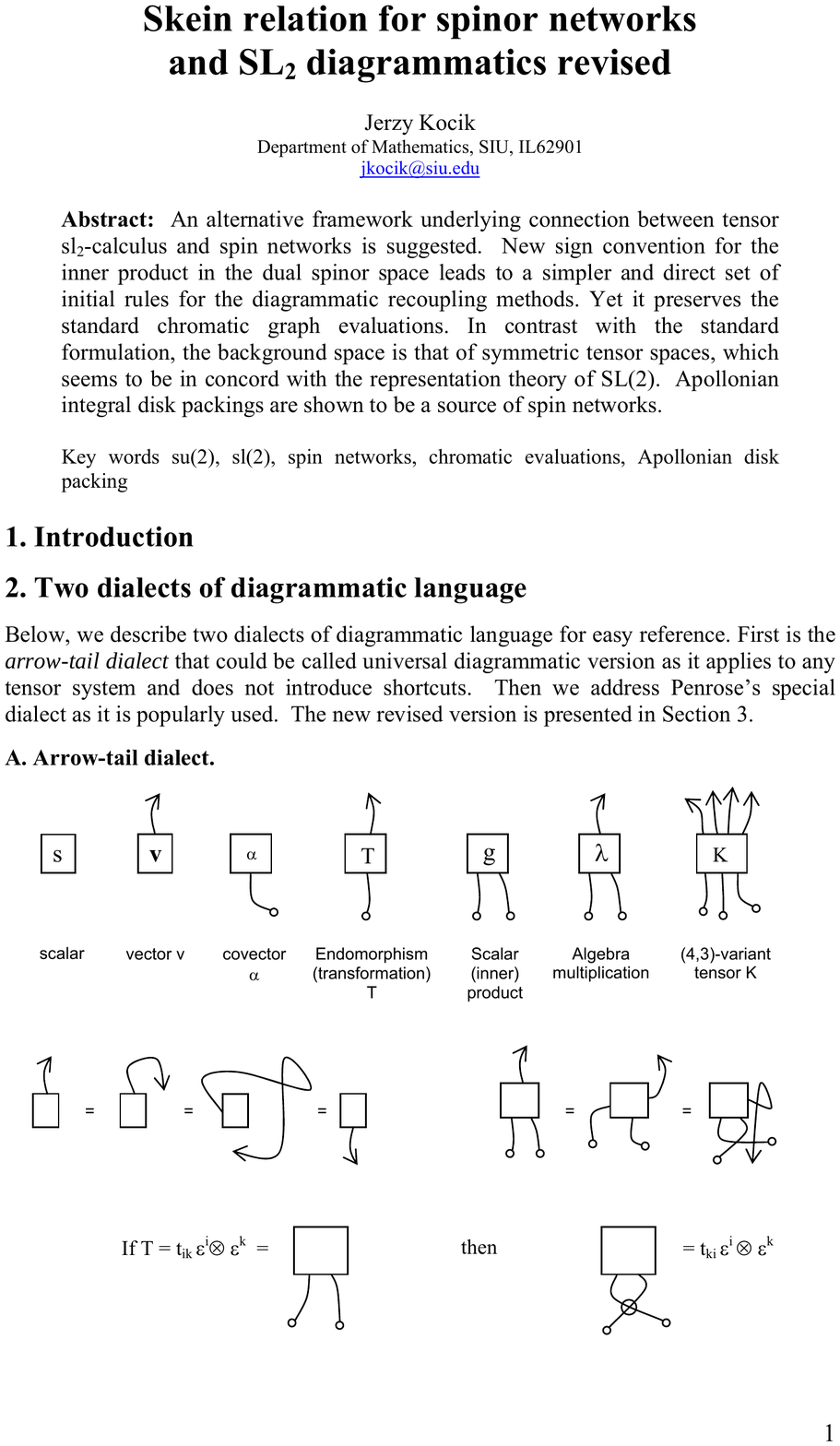}

\noindent 
In general, if a tensor product of a number of tensors with possible mutual contractions is considered
as a single entity,  possibly with rearranged order of indices, 
then either the trick described above may be used or a new encompassing frame will indicate the new situation.  
Here are the two conventions:

\begin{center}
\includegraphics[scale=.9]{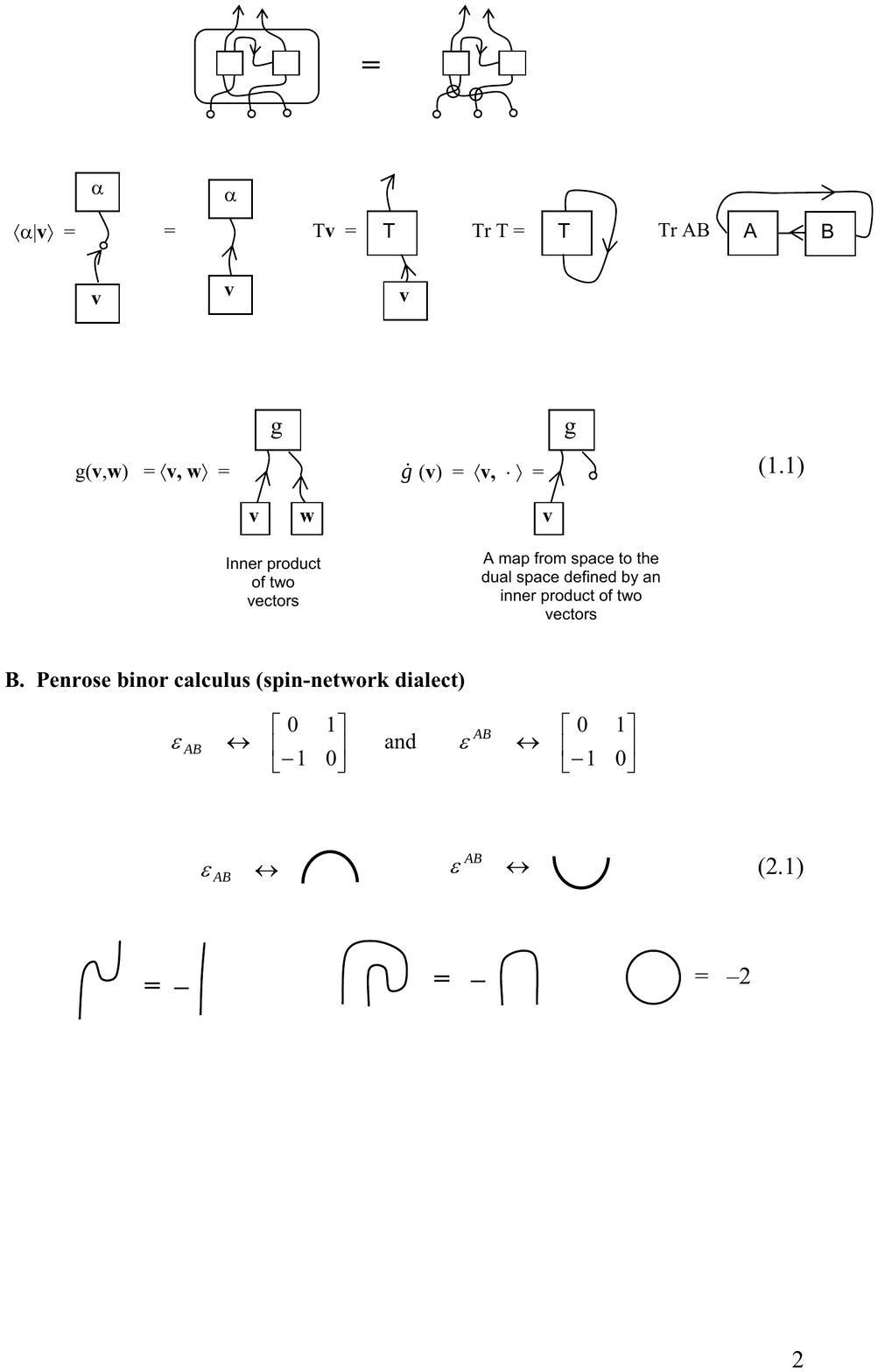}
\end{center}

\noindent 
Various tensor terms and equations may be now represented diagrammatically by joining arrows with tails 
(corresponding to summing up along the corresponding indices).  
Here are a few examples.  
Note that the arrow and the tail under contraction may be drawn by a single path with an arrow head on it:

\smallskip
\begin{center}
\includegraphics[scale=.9]{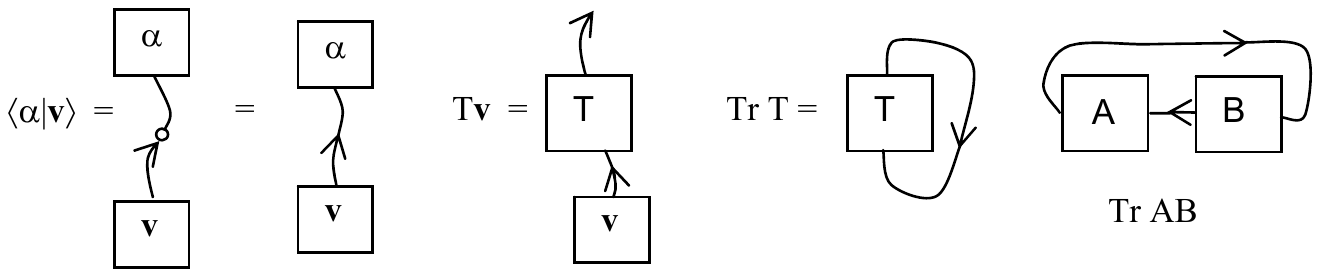}
\end{center}

\noindent 
Any inner product $g: V\otimes V \;\rightarrow\; \mathbb F$ (not necessarily symmetric or non-degenerate) is equivalent 
to a map $\dot{g}: V\;\rightarrow \;V^*$  from the space to the dual space (the dot atop will be dropped if no confusion arises). 
Both maps are ``induced maps'' from a single tensor $g$.

\begin{equation}  
\label{eq:inner}
\includegraphics[scale=.9]{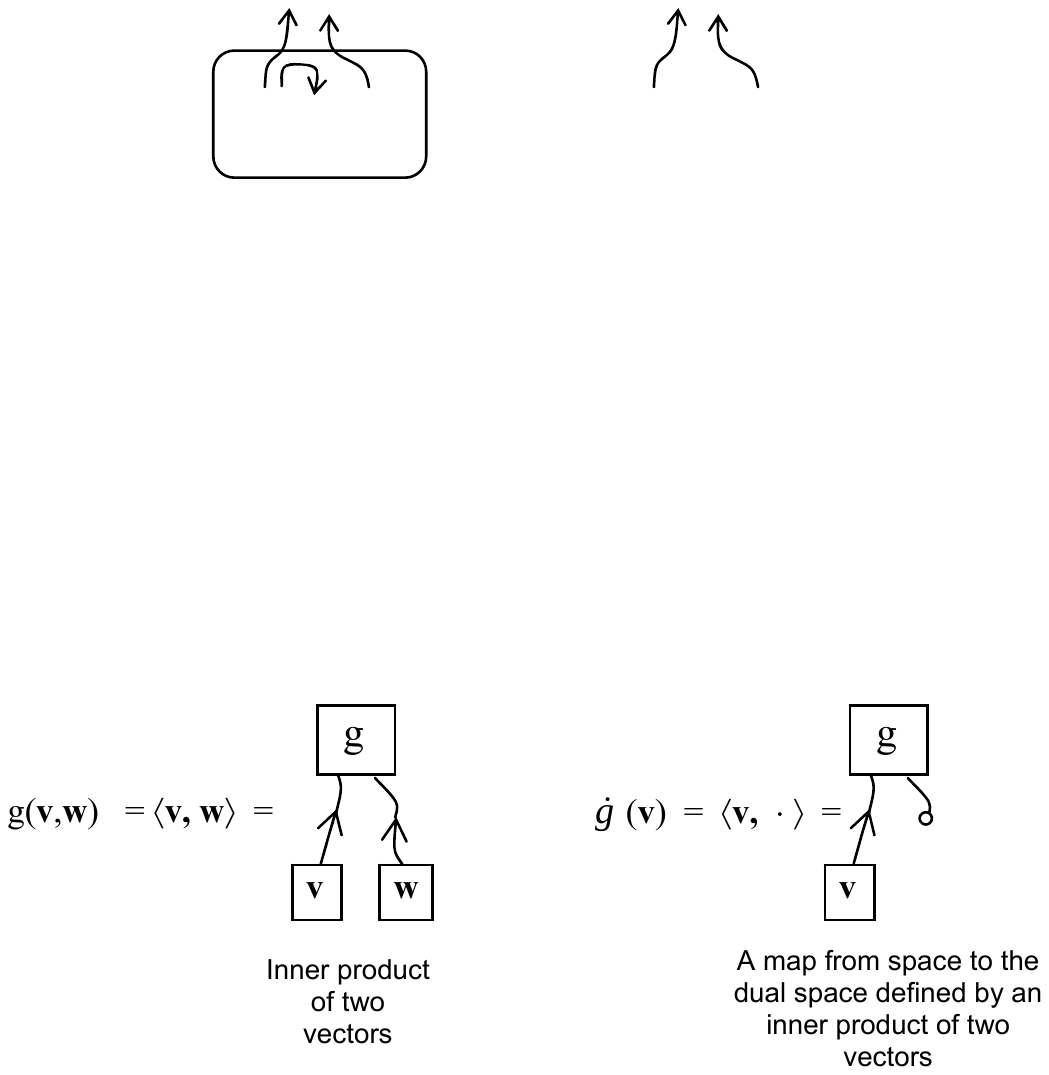}
\end{equation}

\noindent 
{ \bf B.  Penrose binor calculus (spin-network dialect)} (Penrose's ``loop notation'').  
This is a specification, proposed by Penrose and having its roots in some older primitive versions designed 
for Clebsch-Gordan recoupling theory for irreducible representations of ${\rm SU}(2)$.  
One starts with a 2-dimensional space $V$.  
Since there are generally two essential central tensors used, a skew-symmetric 2-covariant tensor $\epsilon$ 
(denoted in physics literature $\epsilon_{AB}$) and  a skew-symmetric 2-contravariant tensor 
(denoted $\epsilon^{AB}$ ), which he represents in some convenient (symplectic) basis by the following matrices
\begin{equation}  
\label{eq:Penrose}
       \varepsilon _{AB} 
                 \quad \leftrightarrow \quad 
       \left[\begin{array}{cc} {0} & {1} \\ {-1} & {0} \end{array}\right]
\qquad\hbox{and}\qquad
       \varepsilon ^{AB} 
                   \quad \leftrightarrow \quad 
       \left[\begin{array}{cc} {0} & {1} \\ {-1} & {0} \end{array}\right]
\end{equation}

\noindent 
The Penrose's diagrammatic notation may be considered as a specification of the universal dialect:  (1) 
the page is given a chosen orientation, upwards; (2) the box notation for the structural `epsilons'' is simplified to arcs:
\begin{equation}  
\label{eq:2.1}
\includegraphics[scale=1]{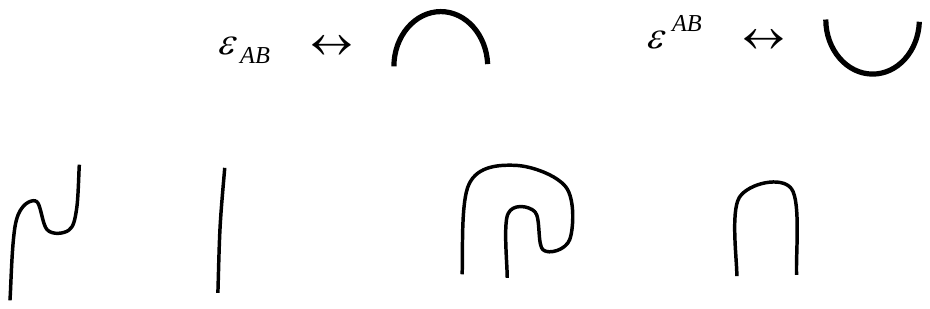}
\end{equation}  
respectively. Now, one easily checks that the following identities follow
\begin{center}
\includegraphics[scale=.9]{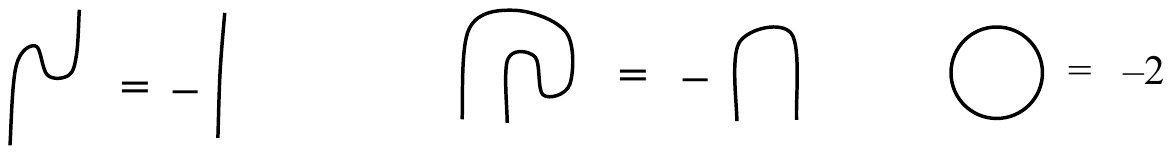}
\end{center}

\noindent 
Terms which involve the various contractions of the above two tensors may be interpreted as 2D ``links'' 
with the following convention:  

\begin{enumerate}
\item[(a)] vertical direction is selected, 
\item[(b)]  minus sign is attributed to every minimum (concave part of the curve), 
\item[(c)] a minus is associated to every crossing. 
\end{enumerate}


\noindent 
We will show how this can be fixed by simple rectification of the sign in structure tensor $\varepsilon$ of the inner product 
in the dual space. 
We resulting rules seem more natural and simpler. This will be shown in the following section.
Yet, the consequent values of evaluations are the same.

\section{ Revised spinor sl${}_{2}$-calculus}
\label{sec:3}

Let $V$ be a 2-dimensional vector space over some field $\mathbb F$, equipped with a skew-symmetric (symplectic) product $\omega$.  
We shall use alternative notations:
\begin{equation}
\label{eq;3.0}
                  \langle \mathbf v, \mathbf w\rangle = \omega (\mathbf v,\mathbf w)
\end{equation}
One may chose a convenient basis $\{\mathbf e_{1},\, \mathbf e_{2}\}$  
in which $\omega(\mathbf e_{1},\, \mathbf e_{2}) = 1 = -\omega(\mathbf e_2,\, \mathbf e_{1})$.  
Relating to \eqref{eq:inner}, we view the inner product as a homomorphism $\dot{\omega }: V\rightarrow V^*$.  
The responsible tensor (bi-form) will be denoted simply by $\omega$.  

\smallskip

It is the inverse map   $\dot{\omega }^{-1} :V^* \rightarrow V$ 
which should define the inner product on the dual space $V^*$,
namely via
\begin{equation}  
\label{eq:3.1}
           \Omega(\alpha,\beta) \ = \   \langle \beta\;|\; \dot{\omega }^{-1} (\alpha)\rangle
\end{equation} 
where $\langle \,\cdot \,| \,\cdot\,\rangle$ denotes the natural pairing of covectors and vectors 
(not to be confused with the inner product $\langle\,\cdot \,, \,\cdot\,\rangle$ ).   
In the universal diagrammatic language we have

\includegraphics[scale=.89]{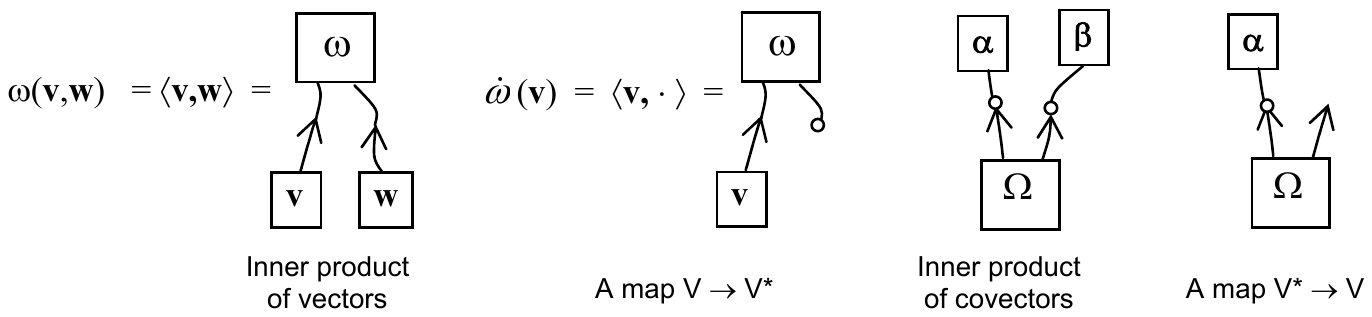}

\noindent
where $\Omega$ is the bi-contravariant tensor underlying the inverse map $\dot{\omega }^{-1} $.  
In the matrix notation, the conclusion is a different association of matrices than the standard \eqref{eq:Penrose}:
\begin{equation}  
\label{eq:3.2}
\omega =\left[\begin{array}{cc} {0} & {1} \\ {-1} & {0} \end{array}\right]    
\qquad      
\Omega =\left[\begin{array}{cc} {0} & {-1} \\ {1} & {0} \end{array}\right]
\end{equation}  
Now, like in the Penrose's dialect, we orient the page upward as in his setup, but 
interpret the half-arcs as shown:
\begin{equation}  
\label{eq:3.3}
\includegraphics[scale=1]{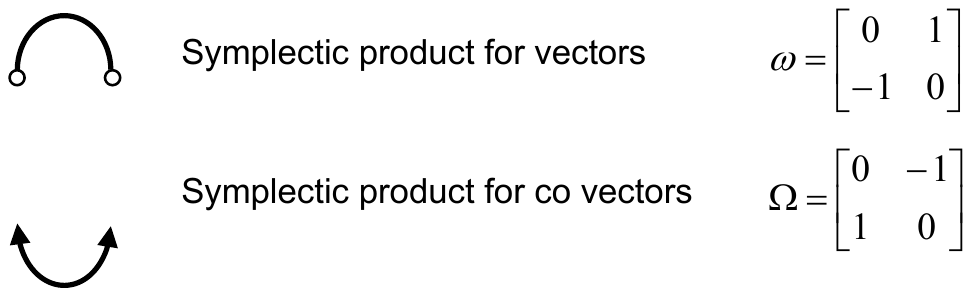}
\end{equation}  

In the following, the basic properties are derived:

\begin{proposition}
\label{thm:3.1}
The following wavy diagram may be straightened:

\includegraphics[scale=1]{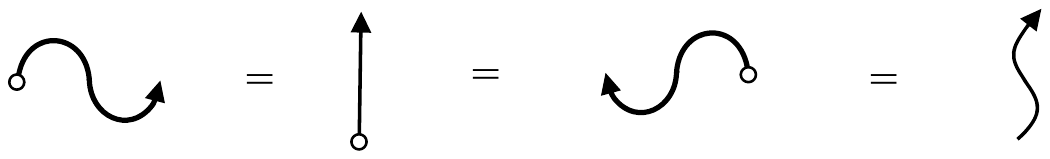}
\end{proposition}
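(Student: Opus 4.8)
The plan is to read the wavy curve, under the interpretation fixed in \eqref{eq:3.3}, as a single strand threaded through one $\omega$-arc and one $\Omega$-arc, and then to verify that this composite is the identity on $V$. Orienting the page upward as prescribed, the wave has exactly one local maximum and one local minimum; by \eqref{eq:3.3} one of these turning points carries the covariant evaluation tensor $\omega$ (the induced map $\dot\omega:V\to V^*$) and the other the contravariant coevaluation tensor $\Omega$ (the induced map $\dot\omega^{-1}:V^*\to V$). Threading the middle strand through both arcs contracts $\omega$ against $\Omega$, i.e.\ forms the zig-zag composite of evaluation and coevaluation, which by the very definition \eqref{eq:3.1} of $\Omega$ as the tensor of the inverse map is nothing but $\dot\omega^{-1}\circ\dot\omega=\mathrm{id}_V$.

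To make this explicit I would pass to indices. Writing the coevaluation as $\Omega=\Omega^{BC}\,\mathbf e_B\otimes\mathbf e_C$ and the evaluation as $\omega_{AB}$, the assertion that the wave straightens to a bare strand is exactly the contraction $\omega_{AB}\,\Omega^{BC}=\delta_A^{\,C}$ (and, for the oppositely wound wave, $\Omega^{AB}\,\omega_{BC}=\delta^A_{\,C}$). Each side is just the matrix product of $\omega$ and $\Omega$, so it remains only to substitute the explicit matrices of \eqref{eq:3.2}: a direct multiplication gives $\omega\,\Omega=\Omega\,\omega=I$. Hence both contractions equal the Kronecker delta, and the wavy strand collapses to a straight one carrying the coefficient $+1$, with no extremum sign.

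The point I want to emphasize is that the only genuine content is the identity $\omega\,\Omega=I$, and this holds \emph{by construction}: $\Omega$ was defined in \eqref{eq:3.1} as the tensor of the inverse map $\dot\omega^{-1}$, so its product with $\omega$ is forced to be the identity with no stray scalar. I therefore expect the main difficulty to be diagrammatic bookkeeping rather than computation --- namely matching each half-arc of the wave to the correct tensor ($\omega$ versus $\Omega$) and respecting the upward orientation of \eqref{eq:3.3}, so that the composite really is $\dot\omega^{-1}\circ\dot\omega$ and not an inverse-free misreading. This is precisely where the revised convention earns its keep: in Penrose's assignment \eqref{eq:Penrose}, where the covariant and contravariant $\varepsilon$'s carry the \emph{same} matrix, the analogous product is $-I$, which is the source of the ad hoc minus attached to maxima and minima; replacing $\varepsilon^{AB}$ by $\Omega=-\varepsilon^{AB}$ cancels that sign and lets the wave straighten cleanly.
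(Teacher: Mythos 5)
Your proof is correct and rests on exactly the same observation as the paper's: the matrices of \eqref{eq:3.2} satisfy $\omega\,\Omega=\Omega\,\omega=I$, which in index form is the contraction $\omega_{AB}\,\Omega^{BC}=\delta_A^{\,C}$ straightening the wave. The paper's proof is a one-line statement of this identity; your version merely spells out the diagrammatic bookkeeping and the by-construction reason ($\Omega$ is the tensor of $\dot\omega^{-1}$) in more detail.
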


\begin{proof}   
$\omega\Omega = \Omega\omega = I$  (identity matrix).  
The reader may want to check by acting on $\mathbf e_1$ and $\mathbf e_2$ 
to get convinced that the matrices correctly represent the situation. 
\end{proof}

\begin{corollary}
\label{thm:3.222}
Any number of indentations can be removed without change of sign.
\end{corollary}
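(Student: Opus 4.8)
The plan is to reduce the general statement to a single application of Proposition \ref{thm:3.1} and then iterate by induction on the number of indentations. First I would make precise what an indentation is in the diagrammatic language: it is a local piece of an otherwise straight vertical strand in which the strand bends out and immediately returns, that is, a configuration consisting of one local maximum directly followed (or preceded) by one local minimum. Under the reinterpretation of the half-arcs in \eqref{eq:3.3}, such a wiggle is precisely the composite of the map $\dot\omega$ with its inverse $\dot\omega^{-1}$, which is exactly the wavy strand that Proposition \ref{thm:3.1} straightens.

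The base case $n=1$ is then nothing but Proposition \ref{thm:3.1}: a single indentation equals a straight strand. What matters for the present statement is that the underlying algebraic identity $\omega\Omega=\Omega\omega=I$ carries \emph{no} scalar prefactor, so straightening introduces no sign. This is exactly the feature that the revised sign convention \eqref{eq:3.2} buys us over the standard one \eqref{eq:Penrose}, where the same maximum–minimum composite would contribute a factor $-1$ in accordance with rule (b). For the inductive step I would appeal to the locality of the calculus: any one of the $n$ indentations occupies a disjoint local window of the diagram, so it may be straightened by the base case while the rest of the diagram is left untouched. Replacing it by a straight strand yields a diagram with $n-1$ indentations, to which the inductive hypothesis applies; since each such replacement multiplies the evaluation by $+1$, the cumulative effect of removing all $n$ of them is $(+1)^n=+1$, and no sign is incurred.

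The step I expect to require the most care is the locality bookkeeping rather than any computation: I must confirm that the $n$ indentations can always be chosen to lie in pairwise disjoint windows, and that the rewrite of Proposition \ref{thm:3.1} is a genuinely local move, valid inside any larger tensor expression. In particular one should check that straightening one wiggle does not create a new maximum or minimum elsewhere, so that the count of remaining indentations genuinely drops by exactly one at each stage. Once this is granted, the induction is immediate and the absence of sign change follows directly from the factor-free identity $\omega\Omega=I$.
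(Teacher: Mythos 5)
Your argument is correct and is essentially the paper's intended one: the corollary is stated as an immediate consequence of Proposition \ref{thm:3.1}, obtained by applying the factor-free identity $\omega\Omega=\Omega\omega=I$ locally to each maximum--minimum pair in turn. Your explicit induction and locality bookkeeping merely make precise what the paper leaves implicit, so there is no substantive difference in approach.
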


\noindent Examples:
$$
\includegraphics[scale=1]{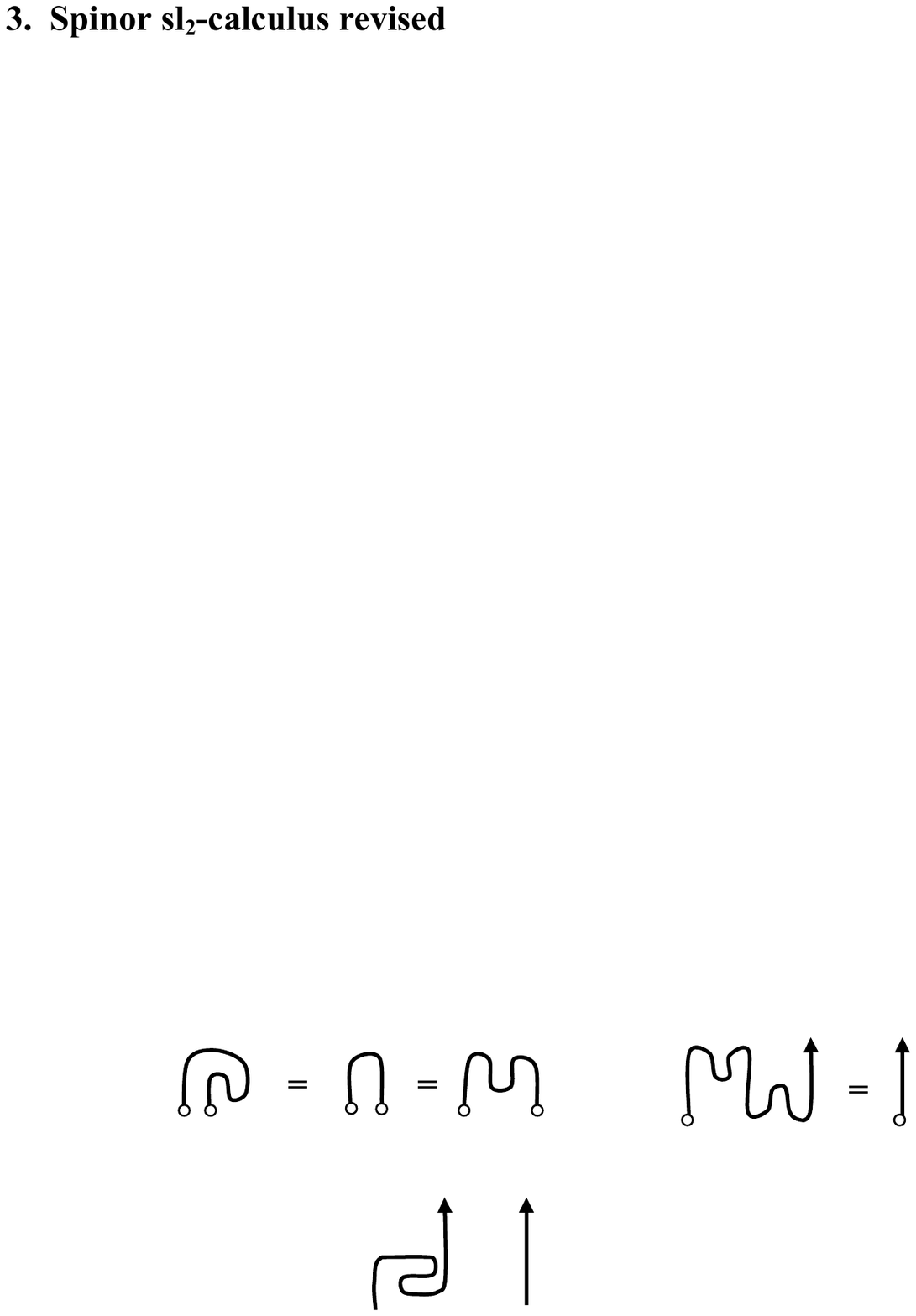} 
$$
Note that the line assuming a horizontal direction part does not cause ambiguity:  
One may bend the horizontal part up or down as both choices lead to the same simplification.  
$$
\includegraphics[scale=1]{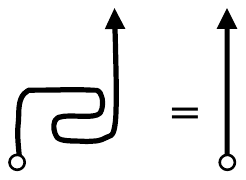} 
$$
\begin{proposition}
\label{thm:3.2}
Circle and self-crossed circle have the following numerical values
\begin{equation}  
\label{eq:3.4}
\includegraphics[scale=1]{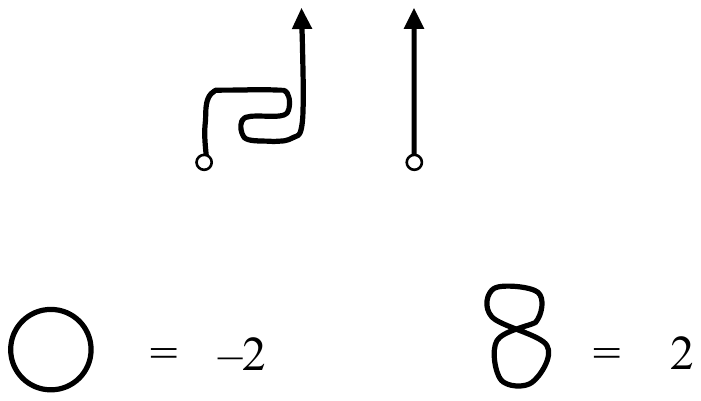} 
\end{equation}  
\end{proposition}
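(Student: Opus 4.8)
The plan is to read each closed diagram as an explicit contraction of the two structure tensors and to evaluate it in the symplectic basis using the matrices in \eqref{eq:3.2}. By the conventions fixed in \eqref{eq:3.3}, an upper arc carries the contravariant tensor $\Omega^{AB}$ and a lower arc the covariant tensor $\omega_{AB}$, so a plain circle is a single upper arc joined to a single lower arc with its two endpoints matched in the order in which they occur along the page. This is precisely the full contraction $\Omega^{AB}\omega_{AB}$, summed over $A,B\in\{1,2\}$.

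First I would evaluate the circle. Only the index pairs $(A,B)=(1,2)$ and $(2,1)$ survive, and \eqref{eq:3.2} gives $\Omega^{12}\omega_{12}+\Omega^{21}\omega_{21}=(-1)(1)+(1)(-1)=-2$. The point worth emphasizing is that this $-2$ emerges directly, with no auxiliary ``minus for each minimum'' rule: the sign that Penrose's dialect attaches by hand to the concave arc has here been absorbed into the opposite sign of $\Omega$ relative to $\omega$.

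Next I would treat the self-crossed circle. A single self-crossing reverses the order in which the two endpoints of one arc meet the other, so the contraction pattern becomes $\Omega^{AB}\omega_{BA}$ in place of $\Omega^{AB}\omega_{AB}$. Using the skew-symmetry $\omega_{BA}=-\omega_{AB}$ (equivalently that of $\Omega$), this equals $-\Omega^{AB}\omega_{AB}=+2$; alternatively one recomputes directly $\Omega^{12}\omega_{21}+\Omega^{21}\omega_{12}=(-1)(-1)+(1)(1)=+2$.

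The main point to be careful about is the bookkeeping of the crossing: one must confirm that in the revised convention the crossing enters only through the transposition of the two indices and contributes no further sign of its own, so that the whole discrepancy between the two values is carried by the antisymmetry of the structure tensor. This is exactly the spirit of Proposition \ref{thm:3.1}, where wavy indentations straighten with no sign change, and as there the cleanest verification is to act on the basis vectors $\mathbf e_{1},\mathbf e_{2}$ and check that the matrices faithfully represent the two arcs and their contraction.
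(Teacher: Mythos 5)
Your proof is correct and follows essentially the same route as the paper: the paper evaluates the circle as the trace (full contraction) of the composed structure tensors using the matrices of \eqref{eq:3.2}, which is exactly your computation $\Omega^{AB}\omega_{AB}=-2$, and handles the 8-figure ``in the same way,'' matching your $\Omega^{AB}\omega_{BA}=+2$. Your added remark that the crossing contributes only the index transposition, with the sign carried entirely by the antisymmetry of $\omega$, is a faithful (and slightly more explicit) account of the point the paper makes diagrammatically.
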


\begin{proof}
This is the simple result of taking a trace+ 
\begin{equation}  \label{eq:}
\includegraphics[scale=.91]{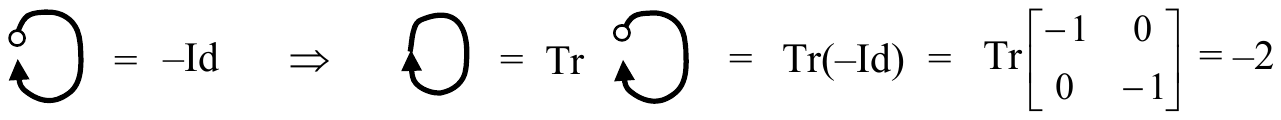} 
\end{equation}  
The case of the 8-figure goes in the same way.
\end{proof}     

%

\begin{proposition}[Skein relation for $sl_2$ calculus]
\label{thm:3.3}
 The diagram with a crossing may be replaced by a formal sum of two diagrams where 
the crossing is replaced by each of  the two possible non-crossed connections :

\begin{equation}  \label{eq:3.5}
\includegraphics[scale=1]{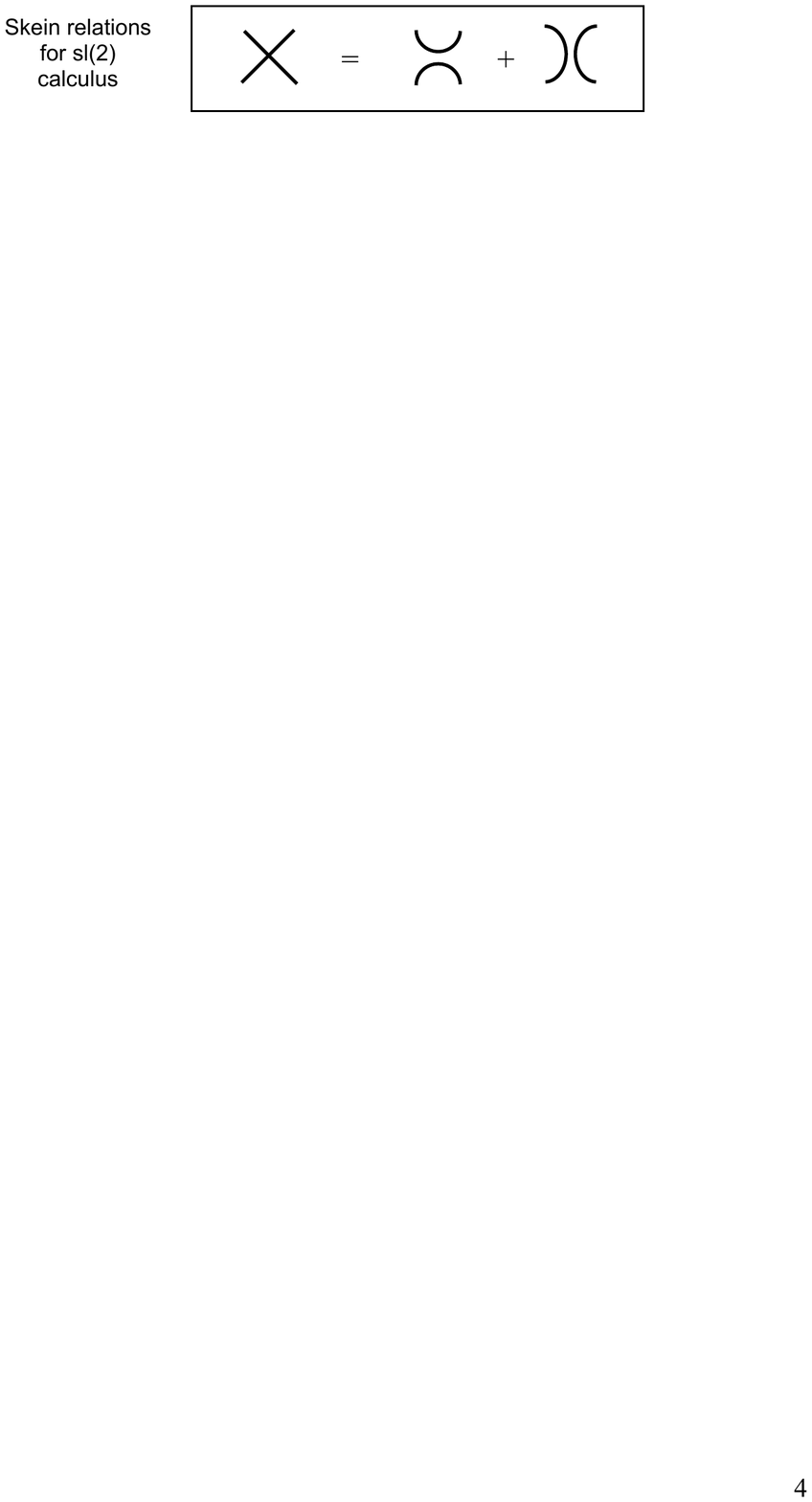} 
\end{equation}  

\end{proposition}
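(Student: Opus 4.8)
The plan is to read both sides of \eqref{eq:3.5} as linear endomorphisms of $V\otimes V$ and to check that they coincide. The crossing on the left is the transposition operator $\tau\colon V\otimes V\to V\otimes V$, $\tau(\mathbf v\otimes\mathbf w)=\mathbf w\otimes\mathbf v$. On the right, the first uncrossed diagram is the identity of $V\otimes V$, while the second is the rank-one ``cup-cap'' operator obtained by first annihilating the two incoming vectors with the covariant tensor $\omega$ (the lower arc) and then creating two outgoing vectors with the contravariant tensor $\Omega$ (the upper arc), the orientations being fixed by \eqref{eq:3.3}. Since $\dim(V\otimes V)=4$, it suffices to verify the equality on the basis $\{\mathbf e_i\otimes\mathbf e_j\}_{i,j\in\{1,2\}}$.

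First I would pass to components using the matrices \eqref{eq:3.2}. In index notation the identity is $\delta^A_C\delta^B_D$, the crossing is $\delta^A_D\delta^B_C$, and the cup-cap term is $\Omega^{AB}\omega_{CD}$, whose entries are read directly off $\Omega$ and $\omega$. The proposition then collapses to the single algebraic identity
\[
   \delta^A_D\delta^B_C \;=\; \delta^A_C\delta^B_D + \Omega^{AB}\omega_{CD},
\]
to be confirmed on the four index pairs $(A,B)\in\{(1,1),(1,2),(2,1),(2,2)\}$. The tensor $\Omega^{AB}\omega_{CD}$ is nonzero only when $A\neq B$ and $C\neq D$; on such entries a one-line check shows it supplies exactly the difference between the flip $\delta^A_D\delta^B_C$ and the identity $\delta^A_C\delta^B_D$, while whenever $C=D$ both the extra term (since $\omega_{CC}=0$) and that difference vanish.

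Conceptually the relation is forced by dimension: in a two-dimensional space antisymmetrization over three indices vanishes (the Schouten identity), so the antisymmetric part of $\delta^A_C\delta^B_D$ spans a one-dimensional space and must be proportional to $\Omega^{AB}\omega_{CD}$. Evaluating the single proportionality constant against Proposition \ref{thm:3.2} (the loop value $\Omega^{AB}\omega_{AB}=-2$) pins it down, so the basis computation merely confirms a structurally inevitable identity. I would include this remark so that the explicit four-case check is read as verification rather than discovery.

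The main obstacle is nothing but sign bookkeeping, and it is exactly the issue the revised convention \eqref{eq:3.2}--\eqref{eq:3.3} is designed to resolve. Under the standard Penrose normalization \eqref{eq:Penrose} the cup-cap term enters with a minus sign, which is what forces the ad hoc ``minus at every crossing and every minimum'' rules recalled in Section 2; here the opposite sign chosen for $\Omega$ makes the crossing, the identity, and the cup-cap combine with a uniform plus sign. The delicate point is therefore solely to confirm that, under the half-arc reading of \eqref{eq:3.3}, the horizontal resolution contributes $+\Omega^{AB}\omega_{CD}$ rather than its negative; once that orientation check is settled, the four-case verification is immediate.
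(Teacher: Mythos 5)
Your proof is correct and follows essentially the same route as the paper: both verify the operator identity $\tau = I + \Omega\,\omega$ on $V\otimes V$ by direct computation in the symplectic basis of \eqref{eq:3.2}, the paper doing so by applying $I-\tau$ to a general $\mathbf v\otimes\mathbf w$ and recognizing $-\omega(\mathbf v,\mathbf w)\,\Omega$, you by checking the equivalent index identity $\delta^A_D\delta^B_C=\delta^A_C\delta^B_D+\Omega^{AB}\omega_{CD}$ entrywise. Your added Schouten/rank-one remark, with the constant fixed by the loop value $-2$, is a nice conceptual supplement but does not change the substance of the argument.
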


\begin{proof}   
The following skein relation is implied directly:
\begin{equation}  \label{eq:3.6}
\includegraphics[scale=.91]{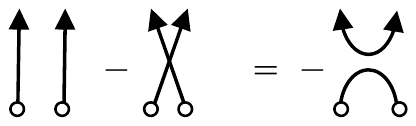} 
\end{equation}  
Indeed, suppose we apply the left-hand side to a pair of vectors $\mathbf v$ and $\mathbf w$, or equivalently,
to $\mathbf v\otimes\mathbf w$, we get  $\mathbf v \otimes \mathbf w - \mathbf w \otimes \mathbf v$.
 Express the vectors in the symplectic basis $\{\mathbf e_1,\mathbf e_2\}$ such that 
$\omega(\mathbf e_1,\mathbf e_2) = 1 = \omega(\mathbf e_2,\mathbf e_1)$  (see \eqref{eq:3.2}),
$$  
\mathbf v = v^1 \mathbf e_1 + v^2\mathbf e_2 
\quad \hbox{and}\quad   
\mathbf w = w^1 \mathbf e_1 + w^2 \mathbf e_2\,,
$$  
and simplify:
\begin{equation}  
\label{eq:liczenie}
\begin{array}{lll}
&(v^1\mathbf e_1 + v^2 \mathbf e_2) \otimes (w^1 \mathbf e_1 + w^2 \mathbf e_2)  
     -  (w^1 \mathbf e_1 + w^2 \mathbf e_2) \otimes (v^1 \mathbf e_1 + v^2 \mathbf e_2)\\
    &\hspace{2in} = (v^1w^2 - v^2w^1) (\mathbf e_1 \otimes \mathbf e_2 - \mathbf e_2 \otimes \mathbf e_1 )   \\
    &\hspace{2in} = \omega(\mathbf v,\mathbf w) \; (-\Omega )    =  -  \omega(\mathbf v,\mathbf w) \cdot \Omega 
\end{array}
\end{equation}

\noindent 

\noindent Hence, removing the information about the vectors, we see that the left hand side corresponds to 
a tensor $-\omega\otimes\Omega$, illustrated on the right side of \eqref{eq:3.6}.  
Reorganizing \eqref{eq:3.6} leads to the main statement.    
\end{proof}

The above skein relation differs from the standard diagrammatic language for ${\rm SL}_2$
by the signs.  
The summary of all differences are given in Table 1 at the end of Section \ref{sec:4}.

Thus we have the following rules:  
A { \bf simple ${\rm sl}_2$--diagram} consists of a number of closed loops possibly self-intersecting and intersecting each other.  
A { \bf closed ${\rm sl}_2$--diagram} is a formal linear combination of simple ${\rm sl}_2$-diagrams. 
There is an evaluation which associates to a diagram a number.  
It can be calculated by using the skein relation to turn it into a combination of collections 
of nonintersecting circles and then replacing the circles by the value (-2).  
To be clear:
$$
\includegraphics[scale=.91]{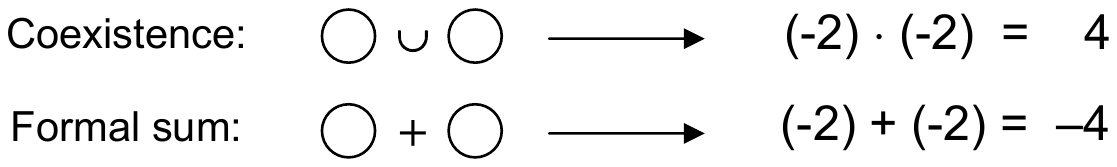} 
$$
\noindent { \bf Remark:}  There is a similarity to Kauffman's bracket for links \cite{Ka}.  Recall that
$$
\includegraphics[scale=1]{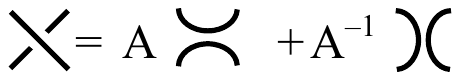} 
$$
\noindent 
Substituting $A = 1$ and disregarding the information 
on the vertical position of the lines at the crossing (above versus below), we get skein relation
\eqref{eq:3.5}.  The standard convetion correspond to $A=-1$ (See Table 1, Sec 4).
Interestingly, the evaluation of a circles is in both cases $d = -(A^2 + A^{-2})=-2$.  
We may call this disregard of crossing a ``diagrammatization'' of the link/knot.

\begin{proposition}
\label{thm:twist}
A single twist changes the sign of a path
$$
\includegraphics[scale=1]{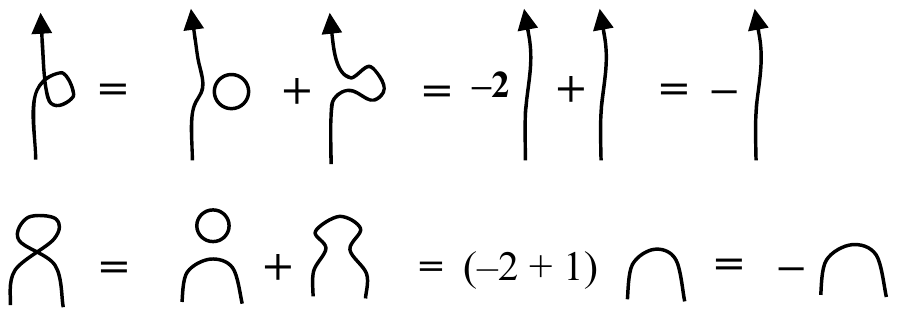} 
$$
\end{proposition}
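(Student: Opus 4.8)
The plan is to resolve the lone self-crossing of the twist by the skein relation of Proposition \ref{thm:3.3} and then evaluate the two resulting crossing-free diagrams with the tools already in hand, namely the circle value of Proposition \ref{thm:3.2} and the straightening of Proposition \ref{thm:3.1} and Corollary \ref{thm:3.222}.

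First I would apply the skein relation \eqref{eq:3.5} at the single crossing of the curl. Because the relation is local, it applies verbatim even though both passes belong to the same strand, and it rewrites the twist as a formal sum of the two planar reconnections, each carrying coefficient $+1$ (this is the $A=1$ instance recorded in the Remark following \eqref{eq:3.5}, where the skein coefficients $A$ and $A^{-1}$ both collapse to $1$). Tracking the four local ends around the crossing, one reconnection splices the strand into a single arc bearing only a removable indentation, while the other reconnection pinches the little loop off as a disjoint closed curve sitting beside an otherwise plain arc.

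Next I would simplify each summand. The indentation in the first summand is straightened by Corollary \ref{thm:3.222} to the bare strand \emph{with no change of sign}, so this term contributes $+1$ times the straight path. In the second summand the detached loop is a plain circle, which by Proposition \ref{thm:3.2} (equation \eqref{eq:3.4}) evaluates to $-2$; factoring it out leaves $-2$ times the straight path. Adding the two contributions yields $(1-2)$ times the straight path, that is, the negative of the straight path, which is exactly the asserted sign change.

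The only genuine subtlety is the planar bookkeeping: one must verify that the two skein smoothings of a curl are precisely ``bare arc'' and ``arc together with one disjoint circle,'' and that the residual wiggle in the first smoothing hides no extra minus. The former is a direct inspection of the four ends meeting at the crossing, and the latter is guaranteed word-for-word by Corollary \ref{thm:3.222}. A pleasant byproduct of both skein coefficients being equal to $+1$ is that the computation is insensitive to the handedness of the twist: interchanging the two smoothings merely swaps the roles of the two summands and returns the same value $-1$, so left- and right-handed twists alike change the sign of the path.
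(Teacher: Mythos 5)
Your proof is correct and follows essentially the route the paper itself takes: the paper states Proposition \ref{thm:twist} without a formal proof but immediately afterwards carries out exactly this skein resolution on the twisted circle (one smoothing straightening to the untwisted curve via Corollary \ref{thm:3.222}, the other splitting off a disjoint circle worth $-2$), and your open-strand version $1+(-2)=-1$ is the same computation. Your added remarks on the bookkeeping of the two smoothings and on handedness are accurate and consistent with the paper's conventions.
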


We can have a new look at Proposition \ref{thm:3.2}.  
The twisted version of the circle results by changing the sign of the value of the untwisted circle.  
Alternatively, one may use the skein identity and get the same result:
$$
\includegraphics[scale=1]{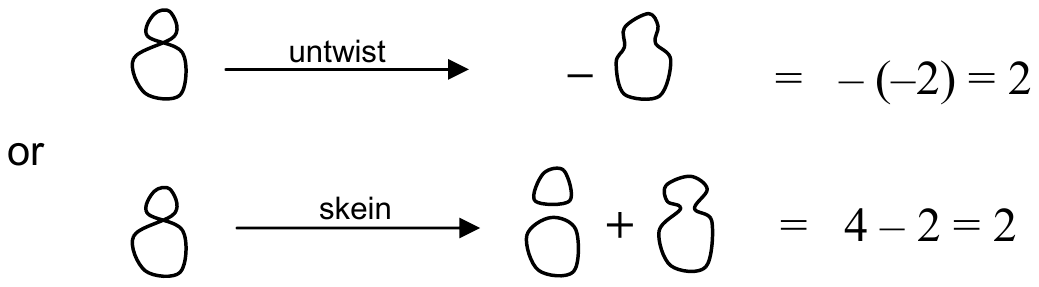} 
$$

\noindent 
{\bf Remark on the band interpretation of the change of sign for twist:}  
Replace the path by a paper strip; 
say vertically extending above the page.  Now, if you straighten out a twist-free path the result will be an unbend strip.  
But if you do the same to a line with a single twist, the strip will be rotationally bent by $360^\circ$.  
And if you do the same to a line with two twists, the resulting strip will be unbent or possibly bent by $720^\circ$.  
Hence we have the same phenomenon as in the case of the double degeneracy of rotations, related to double covering of ${\rm SO}(3)$
by ${\rm SU}(2)$  (or ${\rm SO}(1,2)$ by ${\rm SL}(2,\mathbb R)$). 
$$
\includegraphics[scale=1]{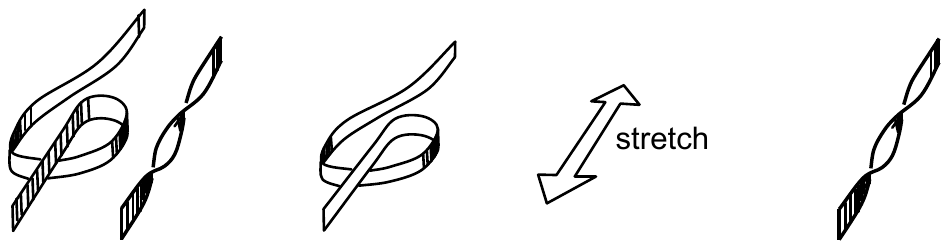} 
$$
\begin{proposition}
\label{thm:reidem}
Reidemeister moves and the corresponding identities R1, R2 and R3 for diagrammatic ${\rm sl}_2$ calculus are:
$$
\includegraphics[scale=1]{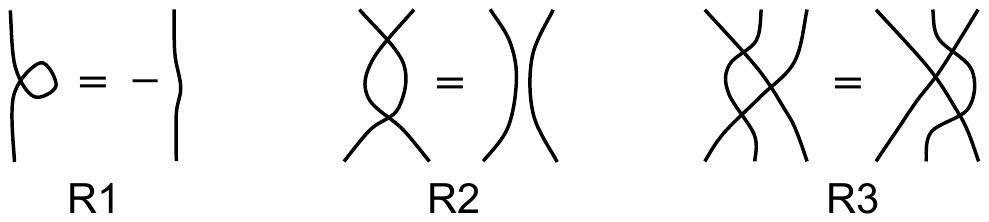} 
$$
\end{proposition}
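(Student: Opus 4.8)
The plan is to reduce all three moves to algebraic identities for the crossing operator and then invoke the results already established. Writing $I$ for the two parallel vertical strands and $U$ for the ``cup--cap'' operator obtained by capping the two lower endpoints with $\omega$ and creating the two upper endpoints with $\Omega$, the skein relation of Proposition \ref{thm:3.3} reads, as an endomorphism of $V\otimes V$, as the statement that the crossing $X$ satisfies $X = I + U$. Acting on $\mathbf v\otimes\mathbf w$ one has $U(\mathbf v\otimes\mathbf w)=\omega(\mathbf v,\mathbf w)\,\Omega$, so that $X(\mathbf v\otimes\mathbf w)=\mathbf w\otimes\mathbf v$; that is, in the revised sign convention the crossing is \emph{exactly} the transposition of the two tensor factors. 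This single observation is what makes the Reidemeister identities fall out cleanly, and I would state it first.

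For R1 I would either cite Proposition \ref{thm:twist} directly or, to keep the argument self-contained, resolve the self-crossing of the curl by the skein relation: one smoothing returns the plain strand, the other returns the plain strand with a detached loop, and since that loop evaluates to $-2$ by Proposition \ref{thm:3.2}, the curl equals $(1+(-2))=-1$ times the untwisted strand. Thus R1 holds up to the framing sign $-1$, which is the ``corresponding identity'' recorded in the statement and is consistent with the band and double-cover remark preceding the proposition.

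For R2 I would compute $X^2$. Using $X=I+U$ together with the fact that the loop produced when two copies of $U$ are composed carries the value $-2$ (again Proposition \ref{thm:3.2}), one gets $U^2=-2U$, hence $X^2=(I+U)^2=I+2U+U^2=I+2U-2U=I$. Equivalently, this is the statement that a transposition squares to the identity; either way the two smoothings of the stacked crossings recombine, the middle circle contributing the $-2$ that cancels the doubled cup--cap term, and R2 holds with no residual sign.

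For R3 I would prove the braid (Yang--Baxter) relation $X_1X_2X_1=X_2X_1X_2$, where $X_1$ and $X_2$ denote the crossing acting on the first two and on the last two of three strands. Since $X$ is the honest transposition on tensor factors, both sides equal the permutation reversing the three factors, so the identity is immediate; any overall sign carried by the crossing would appear as the same factor on both sides and cancel. I expect this last move to be the main obstacle only if one insists on a purely diagrammatic verification: expanding every crossing by the skein relation produces $2^3=8$ planar terms on each side, and one must track the circle value $-2$ and match the terms in pairs. The algebraic shortcut---recognizing $X$ as a transposition---bypasses this bookkeeping and is the route I would take, relegating the brute-force expansion to a confirming remark.
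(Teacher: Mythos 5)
Your proof is correct, and it overlaps with the paper's argument in substance while organizing it differently in one genuinely useful way. The paper proves R1 by citing Proposition \ref{thm:twist}, proves R2 by resolving the two crossings diagrammatically with the skein relation and inserting $\langle\mathrm{o}\rangle=-2$, and proves R3 by resolving one side with the skein relation, applying R2, and then appealing to the symmetry of the resulting sum; your R1 and R2 arguments are exact algebraic transcriptions of this (your $U^2=-2U$ is the paper's middle circle, and $X^2=(I+U)^2=I$ is its two-step resolution). Where you genuinely depart from the paper is in isolating up front the observation that, in the revised sign convention, $X=I+U$ acts on $\mathbf v\otimes\mathbf w$ as the honest transposition $\mathbf w\otimes\mathbf v$ --- this is implicit in the computation \eqref{eq:liczenie} but never stated as an operator identity --- and then deriving R3 as the Yang--Baxter relation for transpositions, which both sides satisfy trivially. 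This buys a proof of R3 with no diagram-chasing and no symmetry appeal, and it makes visible the conceptual point of the whole section (the crossing is $+1$ times the swap here, versus $-1$ times the swap in Penrose's convention); what it costs is that a reader working purely diagrammatically must still trust the translation between the pictures and the operators $I$, $U$, $X$ on $V\otimes V\otimes V$, which the paper's explicit expansion supplies. One small caveat: your parenthetical claim that the brute-force R3 expansion yields $2^3=8$ terms \emph{per side} to be matched is exactly the bookkeeping the paper short-circuits by using R2 mid-expansion and then symmetry, so if you do relegate it to a remark you should either carry it out or point to the paper's symmetry argument rather than assert the matching.
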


\begin{proof}   
The first identity is proved as Proposition \ref{thm:twist}. 
Here is the proof of R2.  
The heavy dots indicate which of the intersections are about to be resolved in the next step (by the skein relation):
$$
\includegraphics[scale=.9]{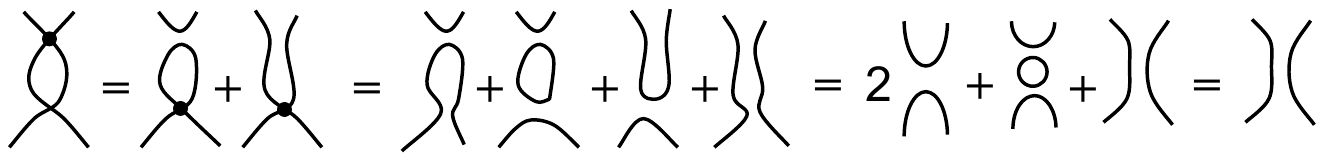} 
$$
\noindent where evaluation  $\langle \rm o\rangle = -2$ was used at the end.  Proof of R3:
$$
\includegraphics[scale=.9]{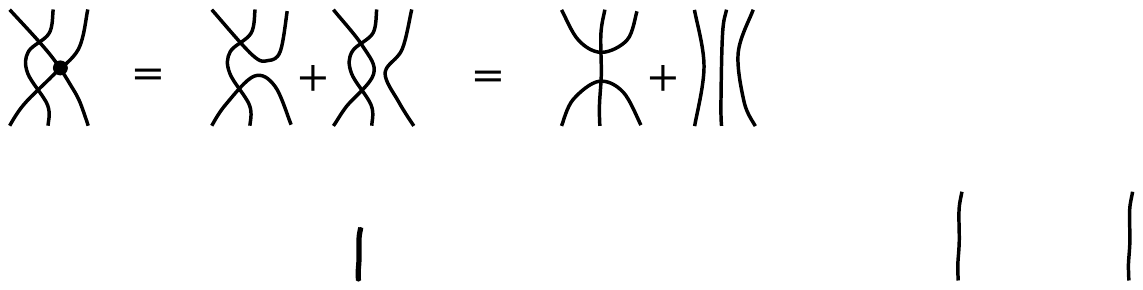} 
$$
(Use R2 in the second step).  
This ends the proof, since the right hand side, by symmetry, resolves to the same sum. 
\end{proof}   

\noindent 
{\bf Remark:} Show that the right hand side of the last diagram may be resolved further 
into the following sum of non-intersecting paths:
$$
\includegraphics[scale=.9]{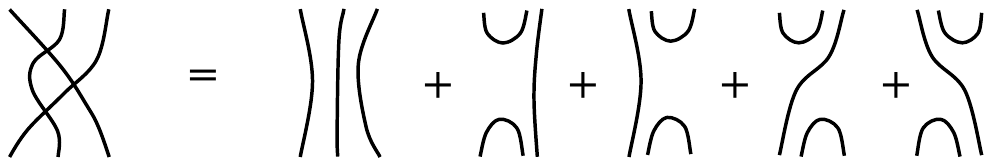} 
$$
\noindent Superfluous points of intersection like the ones below do not affect the valuations:
$$
\includegraphics[scale=.9]{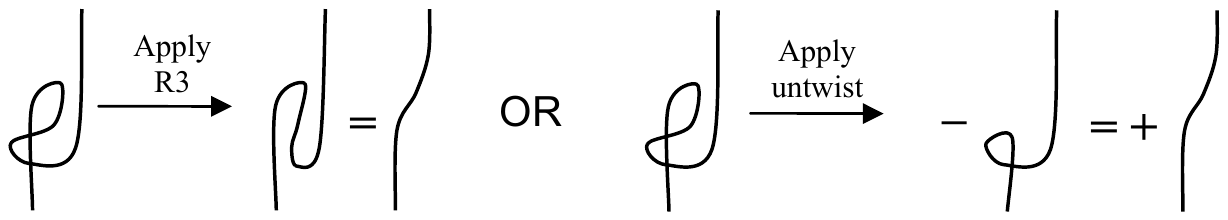} 
$$

\begin{definition}
\label{def:3.6}
A symmetrizing tensor $S$ of degree $n$ is an $(n,n)$-variant tensor totally symmetric in the contravariant sector 
and covariant sector, which, acting on $n$  vectors $\{\mathbf v_1,\dots , \mathbf v_n\}$ (or, equivalently, their tensor product) gives: .
\begin{equation}  
\label{eq:3.7}
S(\mathbf v_{1} \otimes ...\otimes \mathbf v_{n} )  
        =       \frac{1}{n\, !} \sum _{\sigma }\mathbf v_{\sigma (1)}  \otimes ...\otimes \mathbf v_{\sigma (n)} 
\end{equation}  
where the sum extends over all elements $\sigma$ of the symmetry group ${\rm S}_n$.  For example
$$
\begin{array}{ll}
S(\mathbf v\otimes \mathbf w)  
     &=   \frac{1}{2} ( \mathbf v \otimes \mathbf w + \mathbf w \otimes\mathbf v )\\[7pt]
S(\mathbf v\otimes \mathbf w  \otimes \mathbf u) 
  & =  \frac{1}{6} (
   \mathbf v \otimes \mathbf w \otimes \mathbf u 
+   \mathbf v \otimes \mathbf u \otimes \mathbf w 
+   \mathbf w \otimes \mathbf v \otimes \mathbf u \\
& \quad \ +   \mathbf w \otimes \mathbf u \otimes \mathbf v 
+   \mathbf u \otimes \mathbf v \otimes \mathbf w 
+   \mathbf u \otimes \mathbf w \otimes \mathbf v )
\end{array}
$$
\end{definition}

\noindent 
In the diagrammatic notation it will be marked by a transversal line.  
Here are simple examples: 
$$
\includegraphics[scale=.9]{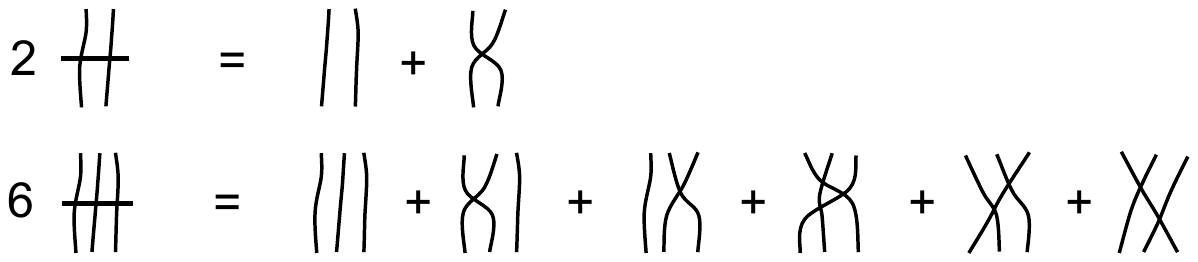} 
$$
(For transparency, the numerical factor is located on the left side.)  
Note that this operator is a projection, $S^2 = S$, or,   diagrammatically:
$$
\includegraphics[scale=.9]{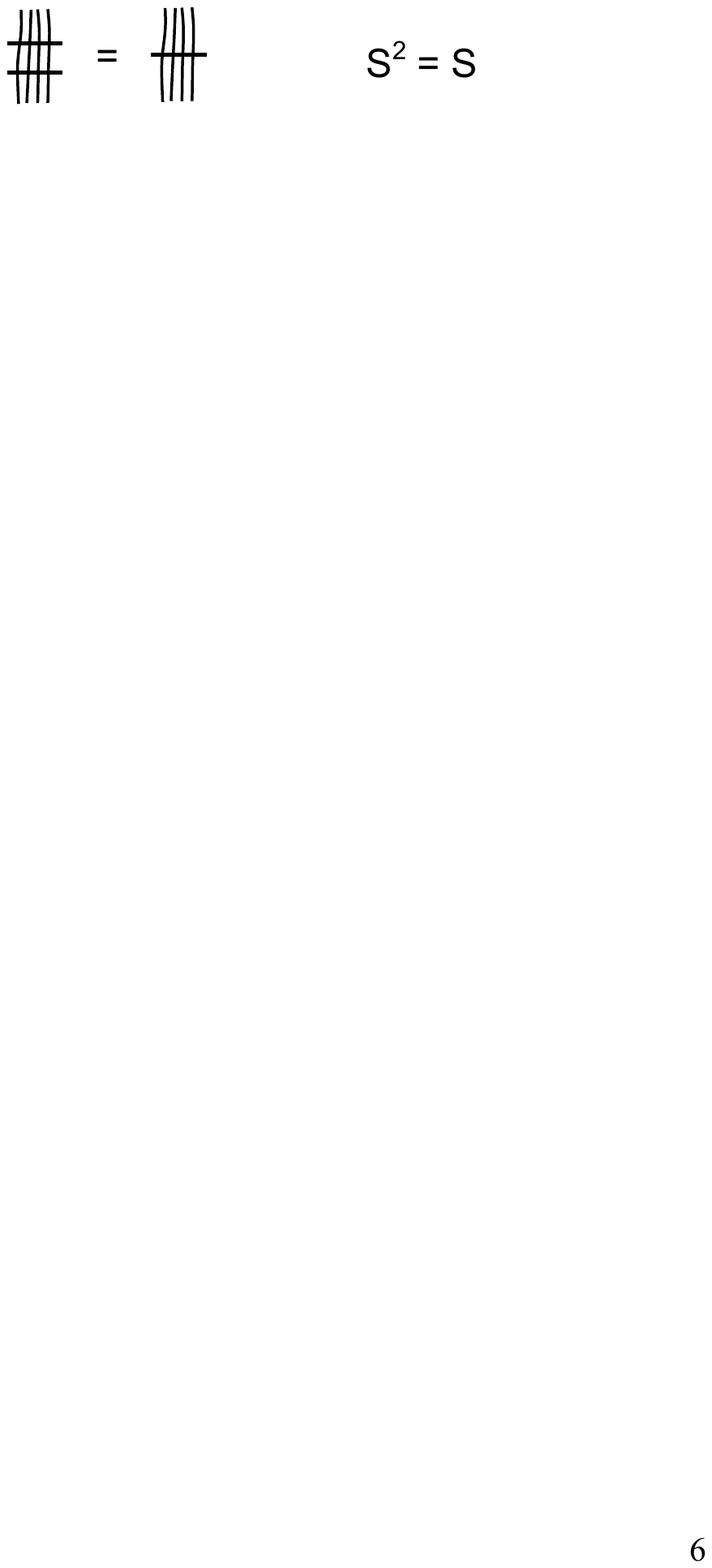} 
$$

\noindent 
Convenient identities: 
$$
\includegraphics[scale=.9]{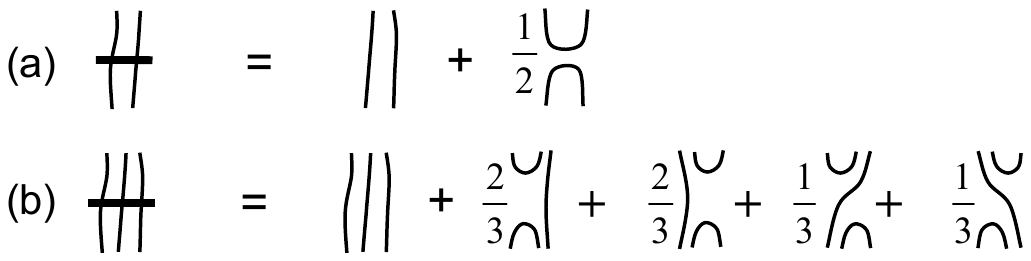} 
$$
{\bf Notational convention:} 
A bundle of lines with the symmetrizing tensor applied to it will be represented by a single line with the number $n$ on it or next to it.  
For example:
\begin{equation} 
\label{eq:3.8}
\includegraphics[scale=.9]{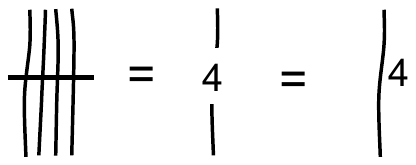} 
\end{equation}

\section{Spin networks}
\label{sec:4}

{ \bf Spin network} (or simply a { \bf net})
is a three-valent graph (vertices and edges) with the edges labeled by natural numbers, say $a$, $b$ and $c$.  
The sum of the numbers of the edges adjacent to any vertex must be even and any of them must not exceed
the sum of the remaining two:   $a,b,c \leq (a+b +c)/2$. 

~

\noindent Every spin network G may be given evaluation $\langle G \rangle$  in these three steps:

\begin{enumerate}
\item  Replace every edge with label $k$ by a group of $k$ nonintersecting {\bf strands} 
with a symmetrizing tensor, like one shown in \eqref{eq:3.8}

\item  Replace every vertex by a nonintersecting connection of the meeting strands

\item  Interpret it as a formal sum of ${\rm sl}_{2}$ diagrams, sum up their evaluations, 
and average by dividing by the number of the diagrams.  
\end{enumerate}

\noindent For example this 3-valent vertex may be viewed as follows:
$$
\includegraphics[scale=.9]{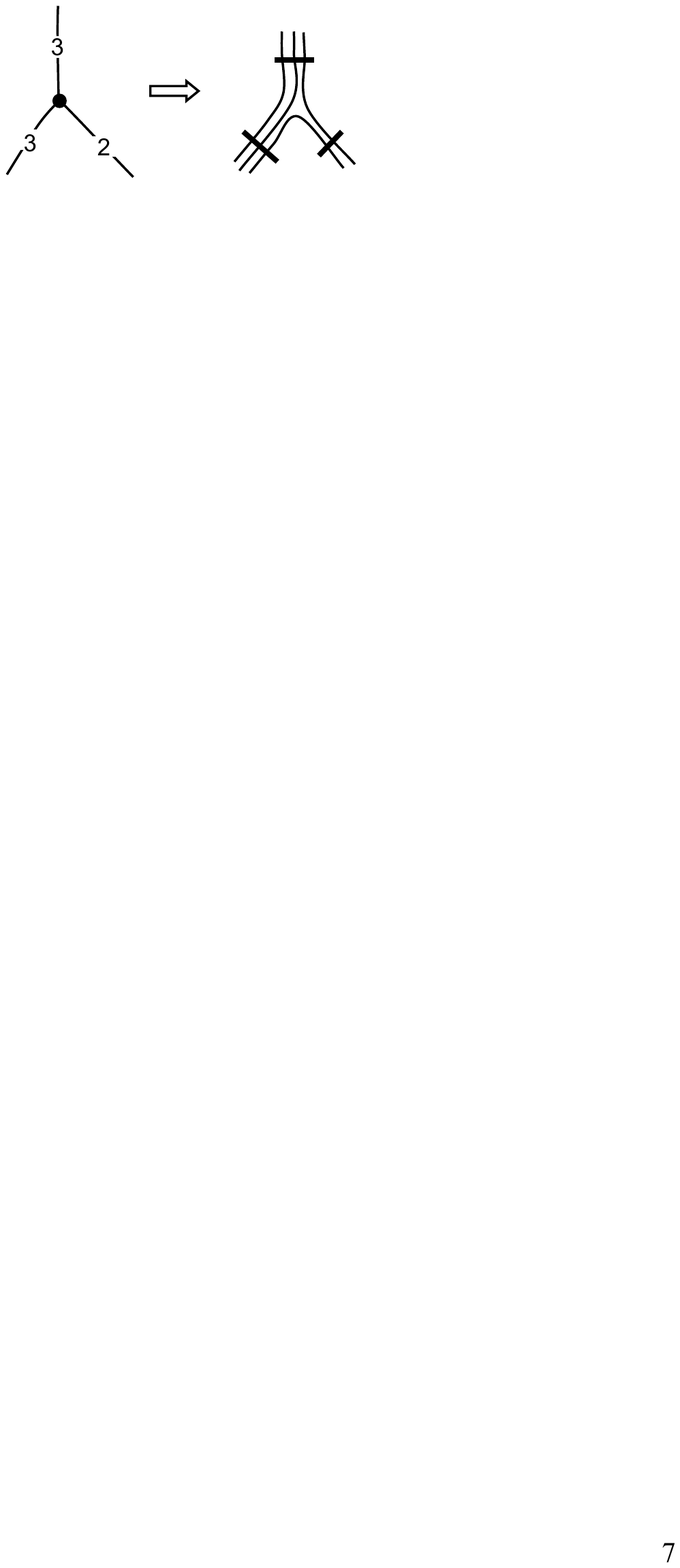} 
$$
The map $G \;\rightarrow\; \langle G \rangle$  is called the { \bf chromatic evaluation} of $G$ 
(suggesting the labels to be viewed as ``colors'').  Here are two simple examples

\

\noindent 
{\bf Example 1:}
$$
\includegraphics[scale=.9]{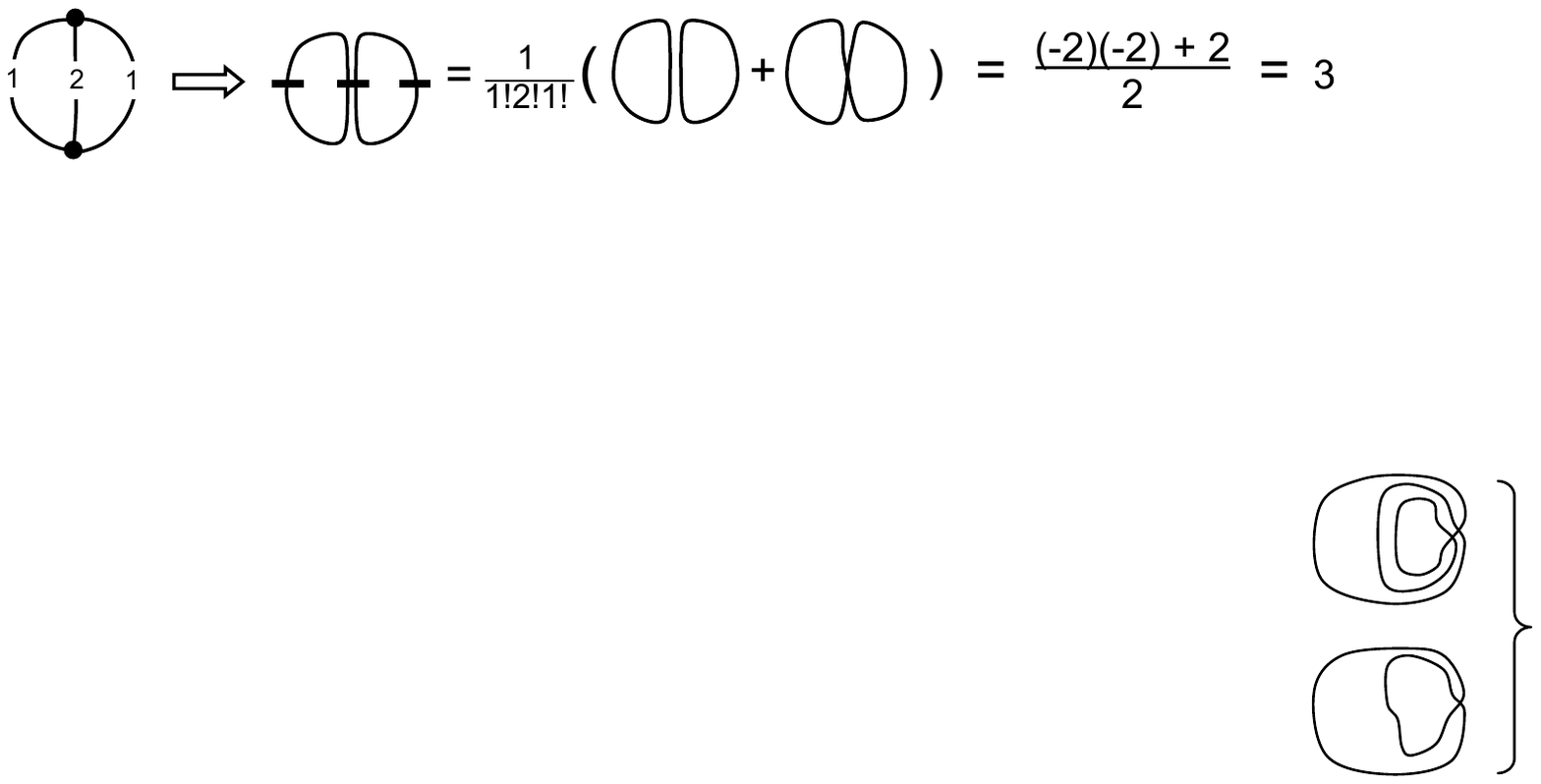} 
$$

\noindent 
{\bf Example 2}:
$$
\includegraphics[scale=.8]{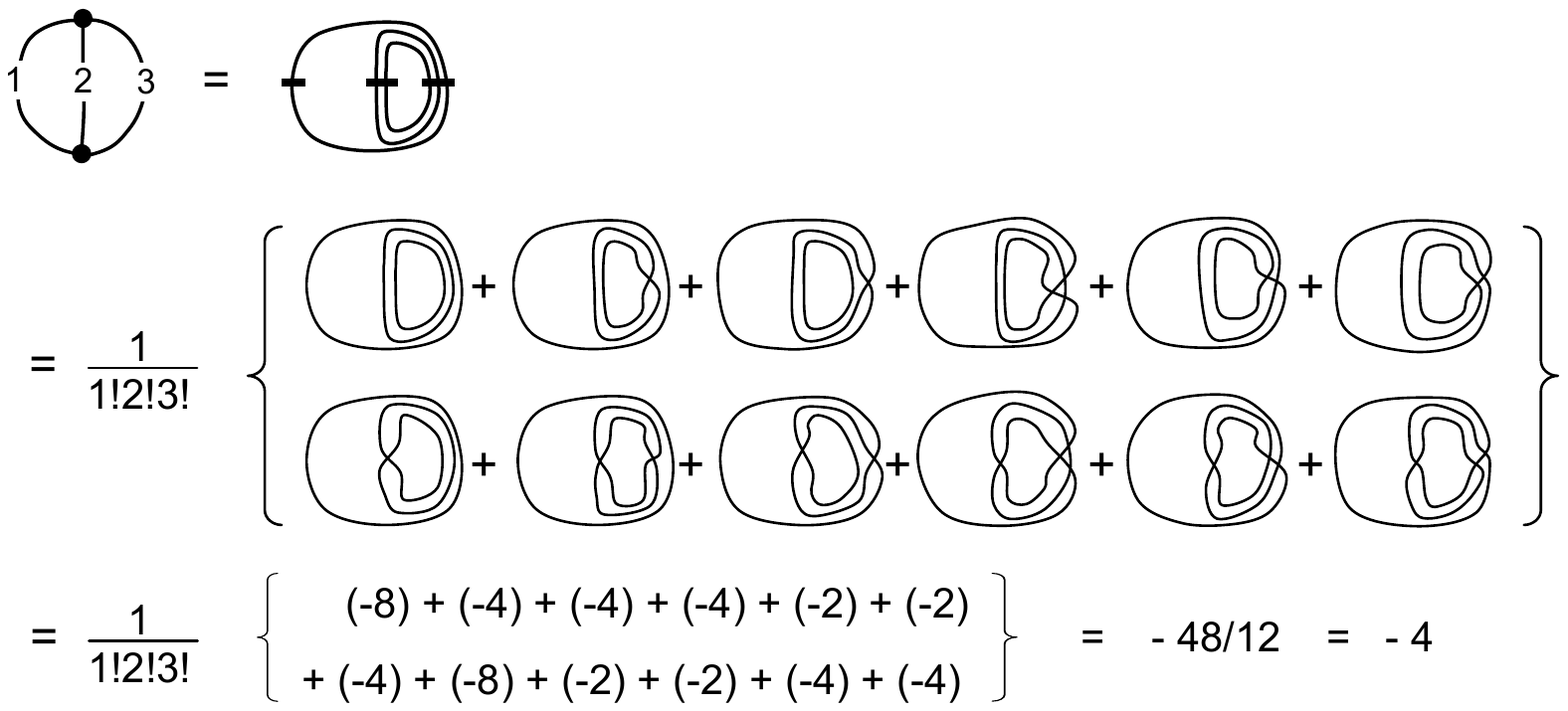} 
$$

\noindent Each set of loops due to particular choice of the permutations of the strands will be called a { \bf state}. 
 All states form a { \bf resolution}. The resolution in the first example consists of two states, in the second of 12.  
Each state consists of a number of (possibly self-intersecting) circles. 
The procedure of chromatic evaluation may then be summarized symbolically as follows:
$$
\includegraphics[scale=.8]{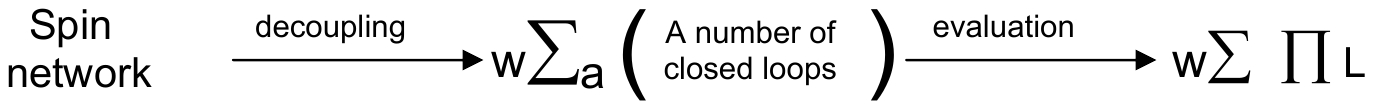} 
$$

\noindent
where in the middle the sigma represents a formal sum over all states of the net and $w$ is the weight coefficient, 
the reciprocal of the number of the resolutions.  
In the third step, the sum becomes arithmetic and each resolution is replaced by 
$$
           L = (-1)^{\hbox{\small\rm(nr of even loops}}\; 2^{\hbox{\small\rm(nr of loops)}}\,.
$$
The set of states forms a Cartesian product  $\mathbb Z_a\times \mathbb Z_b \times \dots \times \mathbb Z_c$, 
where $\{a,b,\dots ,c\}$ is the collection of the labels.

\bigskip

\noindent 
The main observation:  The evaluations obtained via Penrose-inspired rules and developed in \cite{KL}, 
coincide with evaluations presented in the present paper.

\noindent Here we recall the fundamental three results of \cite{KL} as they are valid to our method as well.  
They were obtained with the help of the  Temperley-Lieb algebra and are related to the Clebsch-Gordan 
symbols of the representation theory of ${\rm SL}(2)$.  

\begin{proposition}
\label{thm:4.1}
Formulas of simple spin networks:
%
%
%
\begin{equation} 
\label{eq:4.1}
\begin{array}{clc}
\raisebox{-20pt}{\includegraphics[scale=.9]{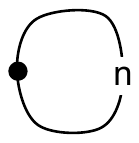}} & 
    = \Delta (n)=(-1)^{n} (n+1)     
    &\quad (a)\\[25pt]
\raisebox{-20pt}{\includegraphics[scale=.9]{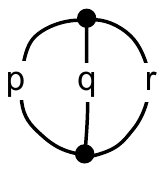}} & 
    =  \theta (p,q,r)=(-1)^{i+j+k} \dfrac{i!\,j!\,k!\,(i+j+k+1)!}{(i+j)!\,(j+k)!\,(k+i)!}      
    &\quad (b)\\[-5pt]
&\qquad \hbox{\rm where} \ \  p=i+j,\; \; q=j+k,\; \; r=k+i\\[7pt] 
\raisebox{-20pt}{\includegraphics[scale=.9]{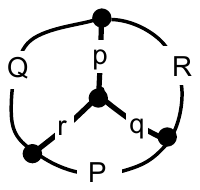}} & 
=    Tet\left[\begin{array}{ccc} {P} & {Q} & {R} \\ {p} & {q} & {r} \end{array}\right]
           =   \dfrac{\prod _{i,j}(b_{i} -a_{j} )! }{p\, !q!r!P!Q!R!} \; 
                       \!\!\! \displaystyle\sum _{s=\max \{ a_{j} \} }^{\min \{ b_{i} \} } \dfrac{(-1)^{s} (s+1)!}{\prod _{i}(s-a_{i} )!\prod _{j}(b_{j} -s)!  }      
     &\quad (c)
\end{array}
\end{equation}
where the terms are defined as follows:
$$
\begin{array}{llll}
&a_{1} =(p+q+r)/2   &\qquad\quad&b_{1} =(p+P+q+Q)/2\\
&a_{2} =(P+Q+r)/2   &&b_{2} =(p+P+r+R)/2\\  
&a_{3} =(P+q+R)/2   &&b_{3} =(q+Q+r+R)/2\\  
&a_{4} =(p+Q+R)/2
\end{array}
$$ 
or simply:
$$
\begin{array}{lll}
&a_i \ = \ \hbox{\rm sum over edges adjusted to vertex \textit{i} (four cases)}\\
&b_j \ = \ \hbox{\rm sum over one of the closed quadrangle through 4 vertices (3 cases)}
\end{array}
$$
\end{proposition}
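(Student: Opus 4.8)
The plan is to reduce all three evaluations to the corresponding results of \cite{KL} by proving that the revised calculus of Section \ref{sec:3} assigns to every closed network the same number as the standard Penrose--Kauffman--Lins calculus. The bridge is the observation already recorded in the Remark after Proposition \ref{thm:3.3}: the revised skein relation \eqref{eq:3.5} is Kauffman's bracket at $A=1$, the standard one is obtained at $A=-1$, and both produce loop value $d=-(A^{2}+A^{-2})=-2$ (Proposition \ref{thm:3.2}). Consequently the two formalisms can differ only by signs, and the task is to show that these signs cancel state by state.

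The conceptual heart is to identify the symmetrizing tensor $S$ of Definition \ref{def:3.6} with the Jones--Wenzl projector $p_{n}$ of the Temperley--Lieb category at $d=-2$. At this value the revised skein relation \eqref{eq:3.5} writes each elementary crossing as $X_{i}=I+e_{i}$, where $e_{i}$ is the cup--cap; since a closed loop evaluates to $d=-2$ we have $e_{i}^{2}=d\,e_{i}$, and therefore $X_{i}^{2}=I+(2+d)e_{i}=I$. Thus the braid relations collapse to the relations of the symmetric group $\mathrm{S}_{n}$, the map $\sigma\mapsto D_{\sigma}$ is a genuine representation of $\mathrm{S}_{n}$ in the algebra, and the average $S=\frac{1}{n!}\sum_{\sigma}D_{\sigma}$ is well defined. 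Because left multiplication by a transposition merely permutes the summands, $X_{i}S=S$, whence $e_{i}S=(X_{i}-I)S=0$ and likewise $Se_{i}=0$; combined with the idempotence $S^{2}=S$ noted after Definition \ref{def:3.6}, the uniqueness of the Jones--Wenzl projector yields $S=p_{n}$. In particular the closure of $S$ is the quantum dimension $\Delta(n)$.

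For part (a) I would then run the Wenzl recursion for $p_{n}$ and close it up, obtaining $\Delta(n+1)=d\,\Delta(n)-\Delta(n-1)$ with $\Delta(0)=1$ and $\Delta(1)=d=-2$; solving this Chebyshev recursion (so $\Delta(n)=U_{n}(d/2)=U_{n}(-1)$) gives $\Delta(n)=(-1)^{n}(n+1)$. For parts (b) and (c) the identification $S=p_{n}$ lets me transport the Kauffman--Lins evaluations verbatim: with $p=i+j$, $q=j+k$, $r=k+i$, the theta network resolves through the bubble and fusion identities together with the formula for $\Delta(n)$ into the stated ratio of factorials, while the tetrahedron is expanded by the single-strand recoupling move into a sum over the internal label $s$, producing the alternating single sum in \eqref{eq:4.1}(c).

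The hard part will be the bookkeeping of signs rather than the combinatorics. In the standard calculus the sign data are carried by the $A=-1$ convention together with the extra minus attached to every minimum and every crossing, whereas in the revised calculus they are absorbed once and for all into the sign of $\Omega$ in \eqref{eq:3.2} and the value $d=-2$. I would therefore isolate a single sign-matching lemma --- that on each state of a closed network the product of local signs agrees in the two conventions --- after which the factors $(-1)^{i+j+k}$ in (b) and $(-1)^{s}$ in (c) appear automatically. Verifying this lemma for the tetrahedron, where crossings, minima, and three independent internal labels interact, is the step I expect to demand the most care.
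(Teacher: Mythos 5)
The paper offers no actual proof of Proposition \ref{thm:4.1}: it simply recalls the three formulas from \cite{KL} and rests on the unproved ``main observation'' that the revised evaluations coincide with the standard ones. Your proposal follows the same overall strategy (reduce everything to the Temperley--Lieb computations of \cite{KL}) but supplies the missing bridge, and the mechanism you choose --- identifying the symmetrizer $S$ of Definition \ref{def:3.6} with the Jones--Wenzl projector $p_n$ at loop value $d=-2$ --- is exactly the right one. Your verification that $X_i=I+e_i$ squares to the identity is consistent with the fact that in this calculus the crossing literally is the tensor swap, so $S$ is an honest $\mathrm{S}_n$-average; and since $\Delta(n)=(-1)^n(n+1)$ never vanishes, $p_n$ exists for all $n$ at $d=-2$, so the uniqueness argument is available. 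Two small points deserve attention. First, uniqueness of $p_n$ requires not just idempotence and $e_iS=Se_i=0$ (the zero element satisfies both) but also that the coefficient of the identity diagram in $S$ equal $1$; this holds, since each $D_\sigma$ contributes the identity diagram with coefficient $1$ when every crossing is resolved, giving $\tfrac{1}{n!}\cdot n!=1$, but you should say so. Second, your proposed ``state-by-state'' sign-matching lemma is stronger than needed and would be painful for the tetrahedron: once both conventions are known to place the \emph{same} element $p_n$ of $TL_n$ on every edge (the standard antisymmetrizer at $A=-1$ and your symmetrizer at $A=1$ coincide by the same uniqueness argument) and to assign the same loop value $-2$, the closed-network evaluations agree globally, and the factors $(-1)^{i+j+k}$ in \eqref{eq:4.1}(b) and $(-1)^s$ in (c) come along for free from \cite{KL}. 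With that substitution your argument is cleaner than what the paper provides, though like the paper it still defers the combinatorial derivations of (b) and (c) to \cite{KL} rather than reproving them.
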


\noindent 
In the description of Tet, the labeling is changed with respect to the original in \cite{KL} to emphasize the natural duality of the opposite edges, 
$p\leftrightarrow P$, 
$q \leftrightarrow Q$, 
$r \leftrightarrow R$, 
thanks to which the symmetries in the formulas become more transparent.

\begin{corollary}
\label{thm:4.3}
The chromatic functions satisfy the following symmetries:
\begin{enumerate}
\item
Symmetries of the theta function:  any permutation of the three terms.\\[-5pt]
\item
Symmetries of Tet:  any permutation of the columns and any vertical flip of two columns simultaneously. 
Here are 4 of 24 terms:
\begin{equation}  \label{eq:}
   Tet\left[\begin{array}{ccc} {A} & {B} & {C} \\ {a} & {b} & {c} \end{array}\right]
 =Tet\left[\begin{array}{ccc} {a} & {b} & {C} \\ {A} & {B} & {c} \end{array}\right]
 =Tet\left[\begin{array}{ccc} {A} & {b} & {c} \\ {a} & {B} & {C} \end{array}\right]
  =Tet\left[\begin{array}{ccc} {B} & {A} & {C} \\ {b} & {a} & {c} \end{array}\right]= ... etc
\end{equation} 
\end{enumerate}

\end{corollary}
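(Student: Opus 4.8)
The plan is to treat the two assertions separately, in each case reducing the claimed invariance to a manifest symmetry of an underlying expression. For the theta function I would first pass to the variables $i,j,k$ defined by $p=i+j$, $q=j+k$, $r=k+i$, whose inverse is $i=(p+r-q)/2$, $j=(p+q-r)/2$, $k=(q+r-p)/2$. One checks that each transposition of $\{p,q,r\}$ induces a transposition of $\{i,j,k\}$ (for example $p\leftrightarrow q$ corresponds to $i\leftrightarrow k$), so the two $S_3$-actions are intertwined by this linear change of variables. It then suffices to observe that the closed form $(-1)^{i+j+k}\,i!\,j!\,k!\,(i+j+k+1)!/\big[(i+j)!\,(j+k)!\,(k+i)!\big]$ is already totally symmetric in $i,j,k$: the sign and the factor $(i+j+k+1)!$ depend only on the symmetric sum, the numerator $i!\,j!\,k!$ is symmetric, and the denominator is the product over all unordered pairs. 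Hence $\theta$ is invariant under every permutation of $p,q,r$.

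For Tet the key structural observation is that, under the opposite-edge pairing $p\leftrightarrow P$, $q\leftrightarrow Q$, $r\leftrightarrow R$, the formula depends on its six arguments only through the fully symmetric product $p!\,q!\,r!\,P!\,Q!\,R!$, the multiset of vertex-sums $\{a_1,a_2,a_3,a_4\}$, and the multiset of quadrangle-sums $\{b_1,b_2,b_3\}$. Indeed the prefactor $\prod_{i,j}(b_i-a_j)!$, the summand $(-1)^s(s+1)!/\big[\prod_i(s-a_i)!\,\prod_j(b_j-s)!\big]$, and the summation range $\max\{a_j\}\le s\le\min\{b_i\}$ are all unchanged under any reindexing of the $a$'s or the $b$'s. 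Thus it is enough to show that each declared column operation permutes the four $a_i$ among themselves and the three $b_j$ among themselves.

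I would verify this on a generating set. A transposition of two columns, say $(P,p)\leftrightarrow(Q,q)$, fixes $a_1,a_2$ and swaps $a_3\leftrightarrow a_4$, while fixing $b_1$ and swapping $b_2\leftrightarrow b_3$. A simultaneous vertical flip of two columns, say $P\leftrightarrow p$ together with $Q\leftrightarrow q$, swaps $a_1\leftrightarrow a_2$ and $a_3\leftrightarrow a_4$ while fixing all three $b_j$. The three column transpositions generate an $S_3$ and the three double flips generate a Klein four-group, so together they generate $V_4\rtimes S_3\cong S_4$ of order $24$, matching the stated count; geometrically these are precisely the tetrahedral symmetries, acting on the four vertices (labelled by the $a_i$) and on the three $4$-cycles (labelled by the $b_j$). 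Invariance of Tet under the full group then follows from invariance under the generators.

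The step I expect to be the main obstacle is the structural reduction for Tet: one must check carefully that reindexing the $a_i$ and $b_j$ genuinely leaves the summand, the factorial product $\prod_{i,j}(b_i-a_j)!$, and especially the summation bounds $\max\{a_j\}$ and $\min\{b_i\}$ unchanged, since the argument collapses if the formula secretly depends on an ordering rather than on the multisets. Once that invariance is established the remaining verifications on the finitely many generators are routine, and passing from generators to the whole group is immediate because invariance is closed under composition.
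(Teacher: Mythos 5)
Your argument is correct and is essentially the intended one: the paper states this corollary without proof, as an immediate consequence of the explicit formulas in Proposition \ref{thm:4.1}, and your verification supplies exactly the missing details (the change of variables $p=i+j,\ q=j+k,\ r=k+i$ intertwining the two $S_3$-actions for $\theta$, and the observation that Tet depends only on the multisets $\{a_i\}$, $\{b_j\}$ and the symmetric product $p!\,q!\,r!\,P!\,Q!\,R!$, so that checking the generators suffices). The generator computations and the identification of the symmetry group as $V_4\rtimes S_3\cong S_4$ of order $24$ all check out against the definitions of the $a_i$ and $b_j$ given in the paper.
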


\newpage

\begin{proposition}
\label{thm:4.4}
Yet another formula worked out in [LK] is the following recoupling formulas
\begin{equation}  
\label{eq:4.4}
\includegraphics[scale=.9]{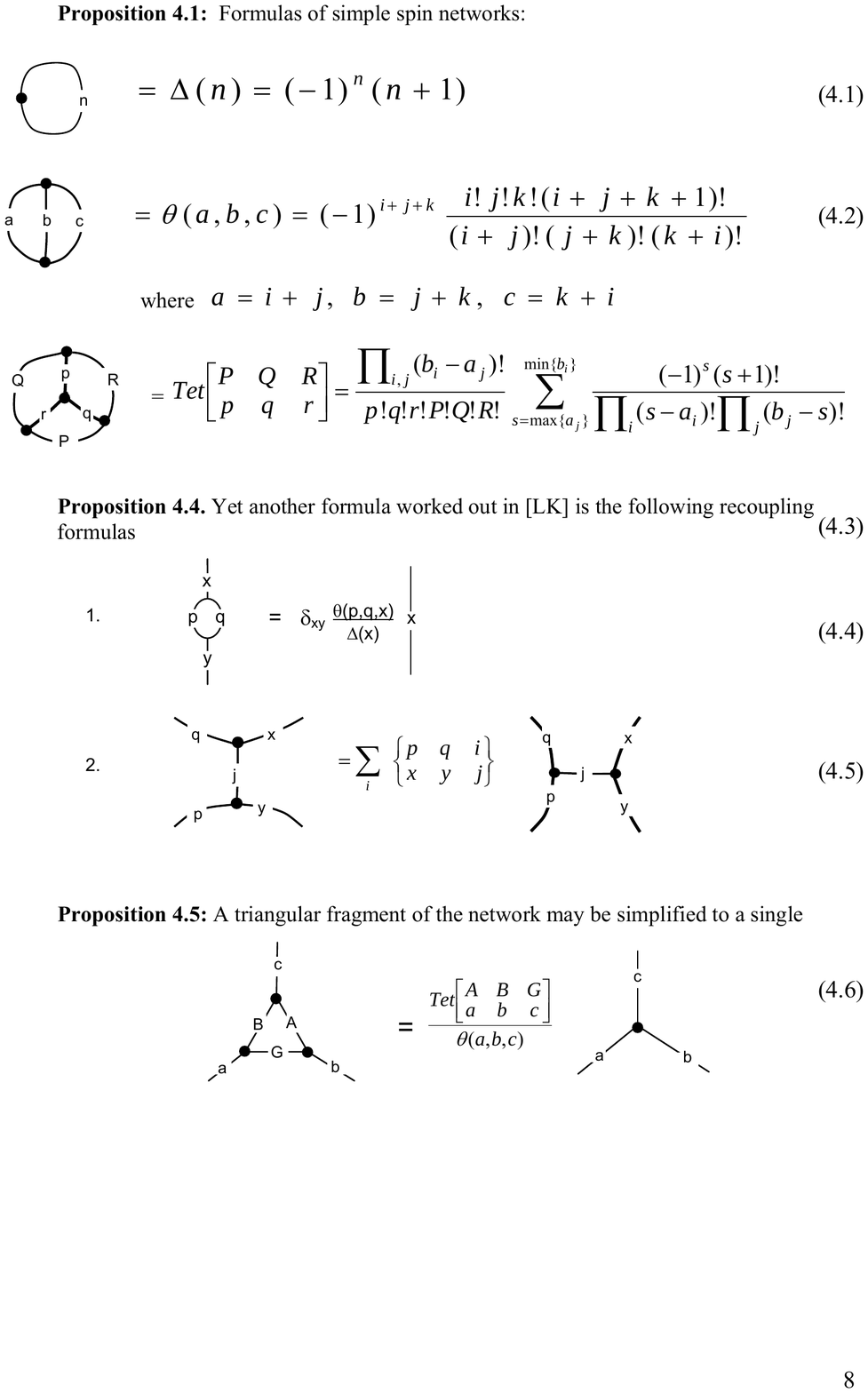}  
\end{equation}  
\begin{equation}  
\label{eq:4.44}
\includegraphics[scale=.9]{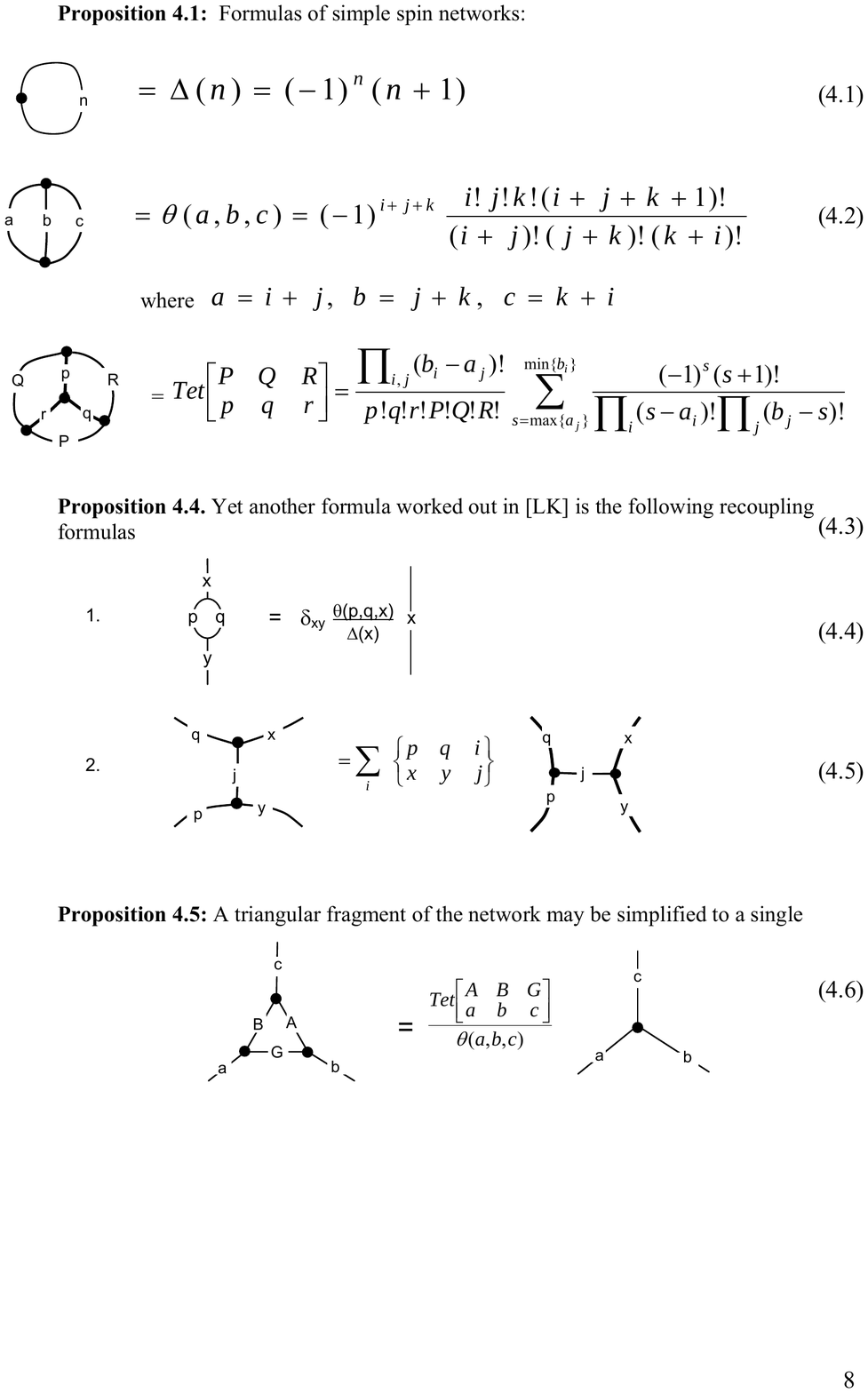}   
\end{equation}  
where the coefficients are known as 6j-symbols and are defined as 
\begin{equation}  
\label{eq:4.6}
\left\{\begin{array}{ccc} {p} & {q} & {i} \\ {x} & {y} & {j} \end{array}\right\}
=\frac{\left[\begin{array}{ccc} {p} & {q} & {i} \\ {x} & {y} & {j} \end{array}\right]\, \Delta _{i} }
         {\theta (p,y,i)\, \theta (q,x,i)} 
\end{equation}  
\end{proposition}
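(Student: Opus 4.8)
The plan is to read both recoupling identities \eqref{eq:4.4}--\eqref{eq:4.44} as change-of-basis formulas inside the skein module of networks carrying four fixed external labels. For a fixed cyclic arrangement of the four outer edges there are exactly two natural ways to insert a single internal edge joining two admissible trivalent vertices --- the ``$i$-channel'' and the ``$j$-channel'' --- and the content of the proposition is that each channel diagram expands in the other, with the $6j$-symbol of \eqref{eq:4.6} as the transition coefficient. First I would establish that, for admissible triples, the relevant space of trivalent intertwiners is one-dimensional; this is forced by the admissibility inequality $a,b,c\le (a+b+c)/2$ together with the symmetrizer projection $S^2=S$ from Definition \ref{def:3.6}, and it is what guarantees that a single internal label suffices to parametrize each basis.

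The main technical tool is the \emph{bubble (orthogonality) lemma}: two trivalent vertices joined along a pair of parallel bundles collapse to a multiple of a single edge,
\begin{equation}
\label{eq:bubble}
\langle\,\mathrm{bubble}(i,j)\,\rangle \;=\; \delta_{ij}\,\frac{\theta(\,\cdot\,)}{\Delta_i}\,,
\end{equation}
where the Kronecker $\delta_{ij}$ enforces equality of the internal colors and $\theta$, $\Delta$ are the evaluations computed in Proposition \ref{thm:4.1}. I would prove \eqref{eq:bubble} directly from the skein relation \eqref{eq:3.5} and the circle value $-2$ of Proposition \ref{thm:3.2}: resolving the two symmetrizers and using $S^2=S$ reduces the closed bubble to the single-edge projector times the scalar $\theta/\Delta$, while the vanishing for $i\ne j$ follows because an unmatched pair of colors cannot be carried through a one-dimensional vertex space.

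With \eqref{eq:bubble} in hand, the transition coefficients are extracted by closing the identity. Gluing the dual $j$-channel vertices onto both sides of \eqref{eq:4.4} turns the $i$-channel left-hand diagram into the closed tetrahedral network, whose value is $Tet\!\left[\begin{smallmatrix}P&Q&R\\p&q&r\end{smallmatrix}\right]$ by Proposition \ref{thm:4.1}(c), whereas on the right-hand side the bubble lemma annihilates every term except the matching one, leaving a single coefficient multiplied by $\theta(p,y,i)\,\theta(q,x,i)/\Delta_i$. Solving gives
\begin{equation}
\label{eq:extract}
\left\{\begin{array}{ccc} p & q & i \\ x & y & j \end{array}\right\}
=\frac{Tet\,\Delta_i}{\theta(p,y,i)\,\theta(q,x,i)}\,,
\end{equation}
which is exactly \eqref{eq:4.6}; the second identity \eqref{eq:4.44} follows by the same closing argument with the two channels interchanged, or directly from the column/flip symmetries of Corollary \ref{thm:4.3}.

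I expect the one genuine obstacle to be the sign bookkeeping. Because the present ``axiomatization'' differs from the Penrose/Kauffman--Lins convention precisely in the placement of minus signs (the $A=1$ versus $A=-1$ distinction noted after \eqref{eq:3.5}), the cleanest route is in fact to invoke the main observation that our chromatic evaluations coincide with those of \cite{KL}: the recoupling identity is an identity among evaluations of closed networks, so once the bubble and tetrahedron values agree in both conventions the coefficient \eqref{eq:extract} is convention-independent and the theorem transfers verbatim from \cite{KL}. The delicate point is therefore to confirm that the revised signs do not alter the \emph{closed-loop} values of $\theta$, $\Delta$ and $Tet$ --- which is guaranteed by that coincidence --- rather than to re-derive the recoupling combinatorics from scratch.
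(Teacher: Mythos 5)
Your proposal is correct in outline, but be aware that it does considerably more than the paper does: the paper offers no proof of Proposition \ref{thm:4.4} at all. Both recoupling formulas are simply quoted as results ``worked out in [LK]'' (Kauffman--Lins \cite{KL}), and their validity in the revised convention is justified only by the earlier ``main observation'' that the chromatic evaluations of closed networks coincide with those of \cite{KL}. You, by contrast, reconstruct the standard Temperley--Lieb derivation: one-dimensionality of the admissible trivalent vertex spaces, the bubble/orthogonality lemma, and extraction of the coefficient by pairing against the dual channel so that the left side closes to $Tet$ and the right side collapses to a single term carrying $\theta\,\theta/\Delta$, which yields \eqref{eq:4.6}. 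That is the classical route and it buys a self-contained proof where the paper has only a citation; your closing remark (transfer the identity from \cite{KL} because the closed evaluations agree) is essentially the paper's entire argument. Two points deserve care if you write yours out in full. First, \eqref{eq:4.4} is an identity between \emph{open} diagrams, i.e.\ in a skein module, not ``among evaluations of closed networks''; agreement of all closed pairings forces equality there only because the pairing with the $j$-channel diagrams is nondegenerate, which rests on the nonvanishing of $\theta(p,y,i)$ and $\theta(q,x,i)$ for admissible triples, visible from (\ref{eq:4.1}b) --- so your closing argument is not an optional extraction device but the load-bearing step. Second, the expansion presupposes that the channel diagrams \emph{span} the space of planar $4$-valent networks modulo the skein relation \eqref{eq:3.5}; this dimension count is standard but is an independent ingredient, not a consequence of $S^2=S$ alone. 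With those two points made explicit your proof is complete, and it is strictly more informative than the paper's treatment.
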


\noindent Let us add to these results the following shortcut

\begin{proposition}
\label{thm:4.5}
A triangular fragment of the network may be simplified to a single vertex as shown: 
\begin{equation}  
\label{eq:kirchoff}
\includegraphics[scale=1]{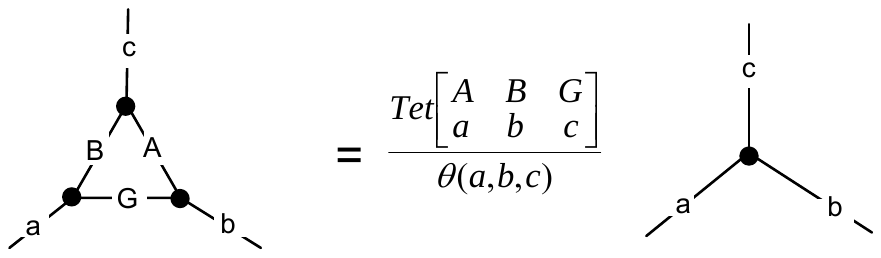}
\end{equation}  
\end{proposition}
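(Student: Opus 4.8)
The plan is to use the uniqueness of the trivalent intertwiner, which is a direct consequence of the representation theory of $\mathrm{SL}(2)$ underlying the whole construction. After each edge is resolved into a symmetrized strand bundle as in \eqref{eq:3.8}, both sides of \eqref{eq:kirchoff} are invariant tensors carrying exactly the same three external legs $a,b,c$. Whenever the admissibility (triangle) inequalities on $a,b,c$ hold, the space of such invariants is at most one-dimensional and is spanned by the single vertex; this is precisely the Clebsch--Gordan fact that the trivial representation occurs with multiplicity at most one in $\mathrm{Sym}^a V\otimes \mathrm{Sym}^b V\otimes \mathrm{Sym}^c V$. Hence, by Schur's lemma, the triangular fragment must equal a scalar multiple $\lambda$ of the single vertex, and the only remaining work is to compute $\lambda$.

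To determine $\lambda$ I would close both diagrams by gluing one auxiliary $(a,b,c)$ vertex onto the three free legs, turning each side into a closed, scalar-valued $\mathrm{sl}_2$ network. On the right-hand (single-vertex) side the closure is the theta graph, which by Proposition \ref{thm:4.1}(b) evaluates to $\theta(a,b,c)$. On the left-hand (triangle) side the same closure produces a closed four-vertex network, i.e.\ a tetrahedron, which by Proposition \ref{thm:4.1}(c) evaluates to the corresponding $Tet[\cdots]$ symbol built from $a,b,c$ together with the three internal labels of the triangle. Because closure is $\mathbb F$-linear, applying it to the proportionality $(\text{triangle})=\lambda\cdot(\text{vertex})$ yields $Tet[\cdots]=\lambda\,\theta(a,b,c)$, so
\[
\lambda=\frac{Tet[\cdots]}{\theta(a,b,c)},
\]
which is the asserted coefficient.

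The main obstacle is the sign bookkeeping forced by the revised $\varepsilon$-convention of Section \ref{sec:3}. One must confirm that every minus sign arising from minima and from crossings (Propositions \ref{thm:3.2}, \ref{thm:3.3} and \ref{thm:twist}) has already been absorbed into the closed evaluations $\theta$ and $Tet$ recorded in Proposition \ref{thm:4.1}, so that no residual sign survives in the ratio defining $\lambda$; since both closures employ the \emph{same} auxiliary vertex, these signs are guaranteed to cancel consistently. A minor point is the admissibility of that auxiliary vertex, which ensures $\theta(a,b,c)\neq 0$ and legitimizes the division; this is built into the definition of a spin network. Once the closure identity is set up with matching sign conventions, the value of $\lambda$ is forced by one-dimensionality, and the proof is complete.
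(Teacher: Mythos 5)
Your argument is correct, and it arrives at the right coefficient $Tet[\cdots]/\theta(a,b,c)$, but it takes a genuinely different route from the paper. The paper stays entirely inside its own list of diagrammatic identities: it applies the recoupling formula \eqref{eq:4.44} to the bottom edge $G$ of the triangle, which produces a sum over an intermediate label of diagrams each containing a bubble (a ``loop with exits''); the bubble identity \eqref{eq:4.4} carries a Kronecker delta that kills all but one term, and the surviving product of the $6j$-symbol \eqref{eq:4.6} with the bubble coefficient collapses to $Tet/\theta$ after invoking the $Tet$ symmetries of Corollary~\ref{thm:4.3}. You instead invoke the multiplicity-one statement for the trivial representation in $\mathrm{Sym}^aV\otimes\mathrm{Sym}^bV\otimes\mathrm{Sym}^cV$ to conclude proportionality a priori, and then determine the scalar by closing both sides with an auxiliary vertex, reading off $Tet$ and $\theta$ from Proposition~\ref{thm:4.1}. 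Your route is the standard Kauffman--Lins ``trivalent tangle'' argument: it is more conceptual, explains structurally why a single term survives, and sidesteps the sign bookkeeping entirely (since the same linear closure is applied to both sides, conventions cancel in the ratio, as you note). Its cost is that it imports the one-dimensionality of the invariant space as an external representation-theoretic fact not proved in the paper, whereas the paper's computation uses only the already-quoted identities \eqref{eq:4.4} and \eqref{eq:4.44}; it also requires the observation (valid here, since the admissible integer-labeled $\theta$ of Proposition~\ref{thm:4.1}(b) is a nonzero ratio of factorials) that the division by $\theta(a,b,c)$ is legitimate. Both derivations are sound and yield the same formula.
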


\begin{proof}   
Apply Eq. \eqref{eq:4.44} to the lower bar segment labeled $G$, and then \eqref{eq:4.4} to the resulting (sum of) loops with exits.  
(The delta function chooses from the sum only one entry.)  
Use the symmetries in Corollary \ref{thm:4.3} to bring the formula to the simpler form.  
\end{proof}   
\

\noindent 
{\bf Conclusions.} 
The new basis for the spin network formalism proposed in this paper is compared with the standard one in the table below.  
We want to mention these two features of the formalism presented here:

\begin{enumerate}
\item  Very simple in handling and natural.  In particular they are derived naturally and do not require 
imposing artificial rules for sign changes at crossing for fixing inconsistencies. 
This natural simplicity should contribute to popularization of the theory. 

\item  Consistency with the representation theory of ${\rm SL}(2)$ (${\rm SU}(2)$).  
Recall that the spin networks are motivated by the irreducible representations of this group (and Lie algebra).  
They are defined on {\it symmetric} tensor products of 2-dimensional spinor space. 
This, hence the symmetrization tensor used to resolve labeled graphs is more in harmony with the initial motivation 
than the antisymmetrization tensor used in the standard approach.  
\end{enumerate}
$$
\includegraphics[scale=.8]{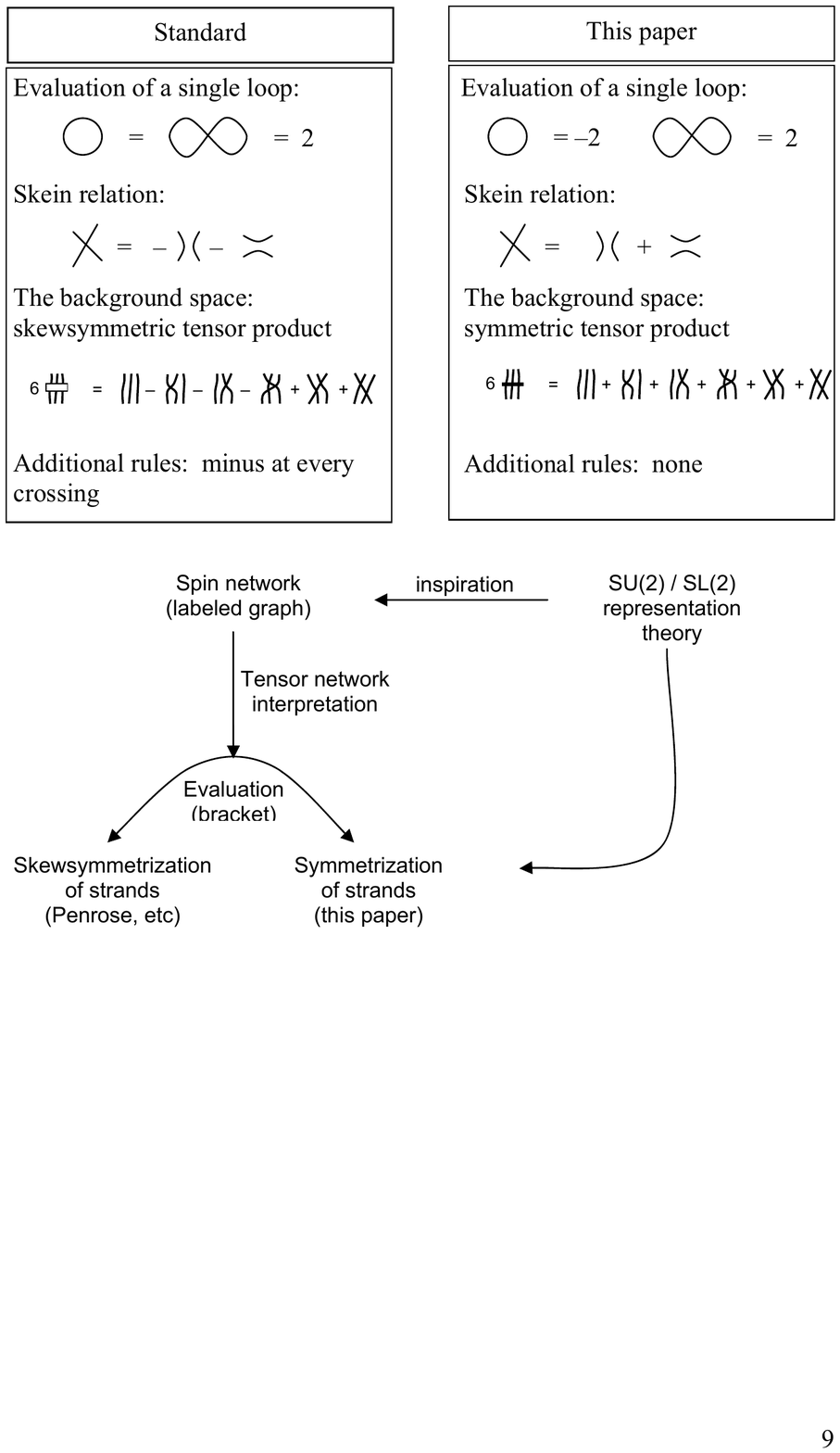}
$$
Remarkably, both versions of rules lead to the same numerical values in chromatic evaluations of closed spin networks. 
Possible further extensions of the formalism in physics might need 
however the symmetric case since antisymmetrization of tensor products of more than 2 vectors vanishes. 
$$
\includegraphics[scale=.84]{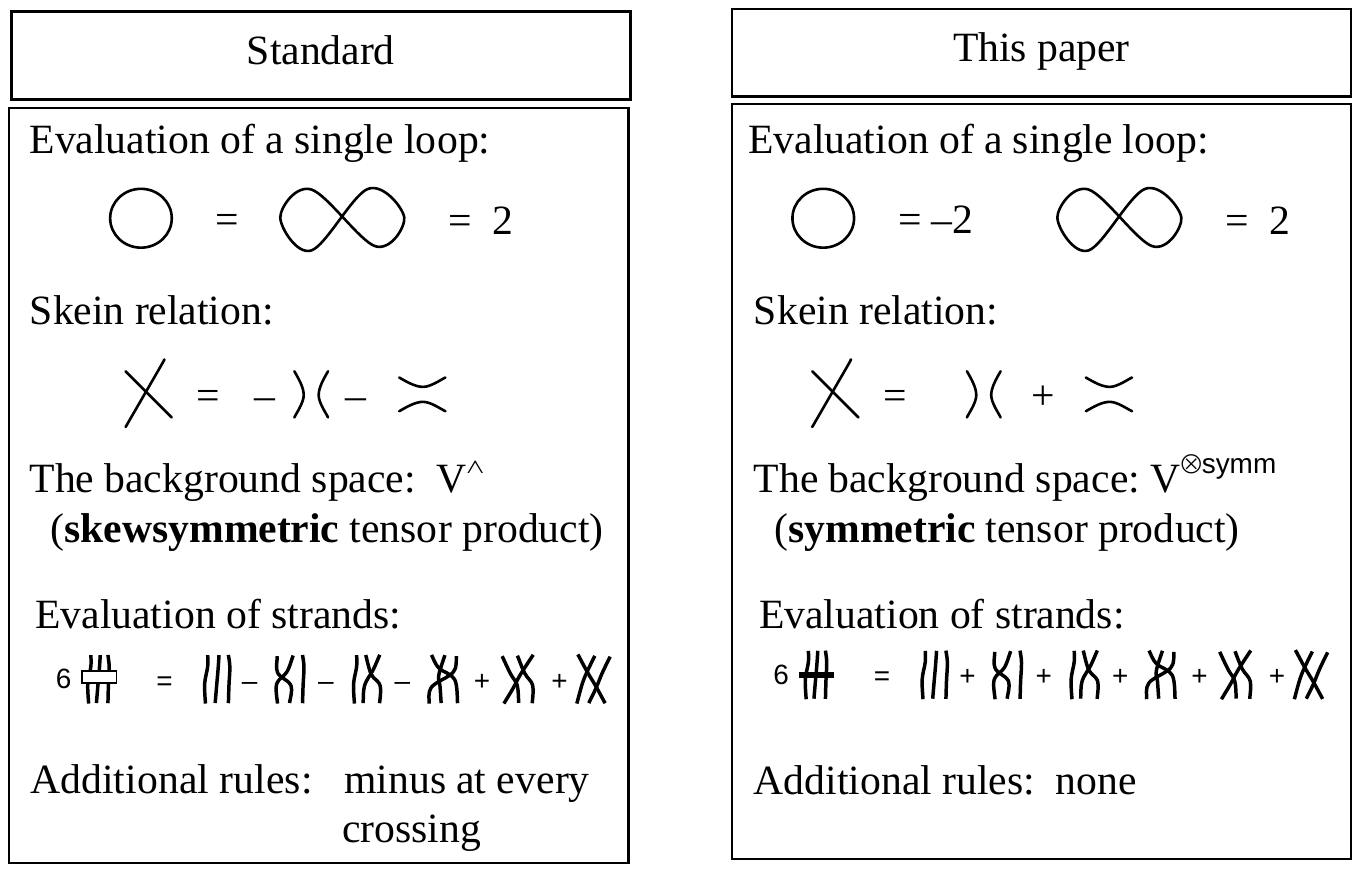}
$$

\newpage

\section{Apollonian disk packing as a spin network}
\label{sec:5}
Apollonian disk packing and other arrangements of mutually tangent circles make a natural source of spin networks.  
This is especially interesting in the light of many features of Apollonian disk packings pointing 
to a possible future test model for loop quantum gravity.  
They contain already such ingredients as Minkowski space-time metric as well as spin structure.  

\smallskip

\noindent 
To turn a configuration of tangent circles into a spin network one needs to:
\vspace{-3pt}

\begin{enumerate}
\item  
Replace every ideal triangle (space between the disks) by a vertex
\item  
Replace every tangency point by an edge and give it the label equal to the sum of curvatures of the two disks.
\end{enumerate}
\vspace{-3pt}

Note that three tangent circles determine two ideal triangles (inner and outer and therefore lead to two vertices:
$$
\includegraphics[scale=1]{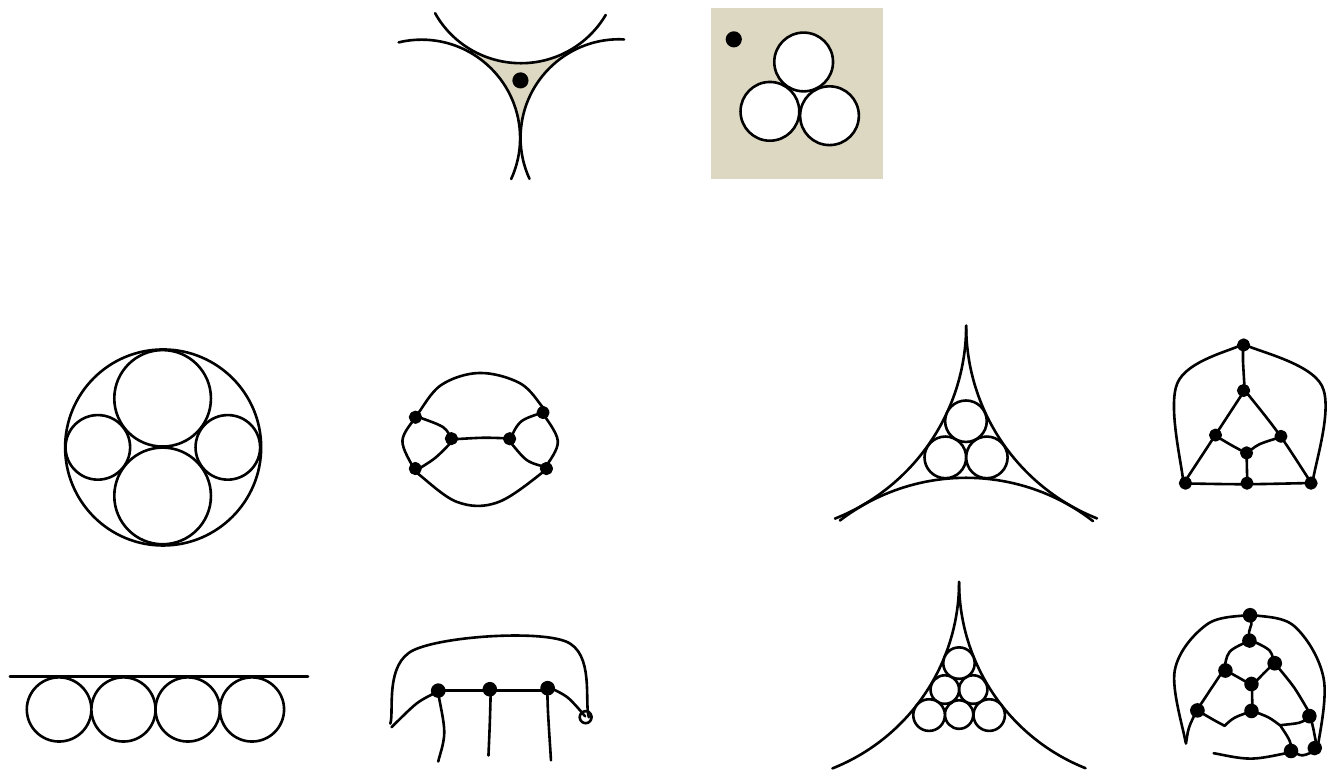}
$$
Here is the idea:  Consider a system of disks of integer curvatures.  
Turn every region into a vertex and every tangency to an edge of a graph.  
The edges are labeled by the sum of the curvatures of the tangent circles.  
This way we obtain a spin network. Here are some simple examples (without labels):
$$
\includegraphics[scale=.9]{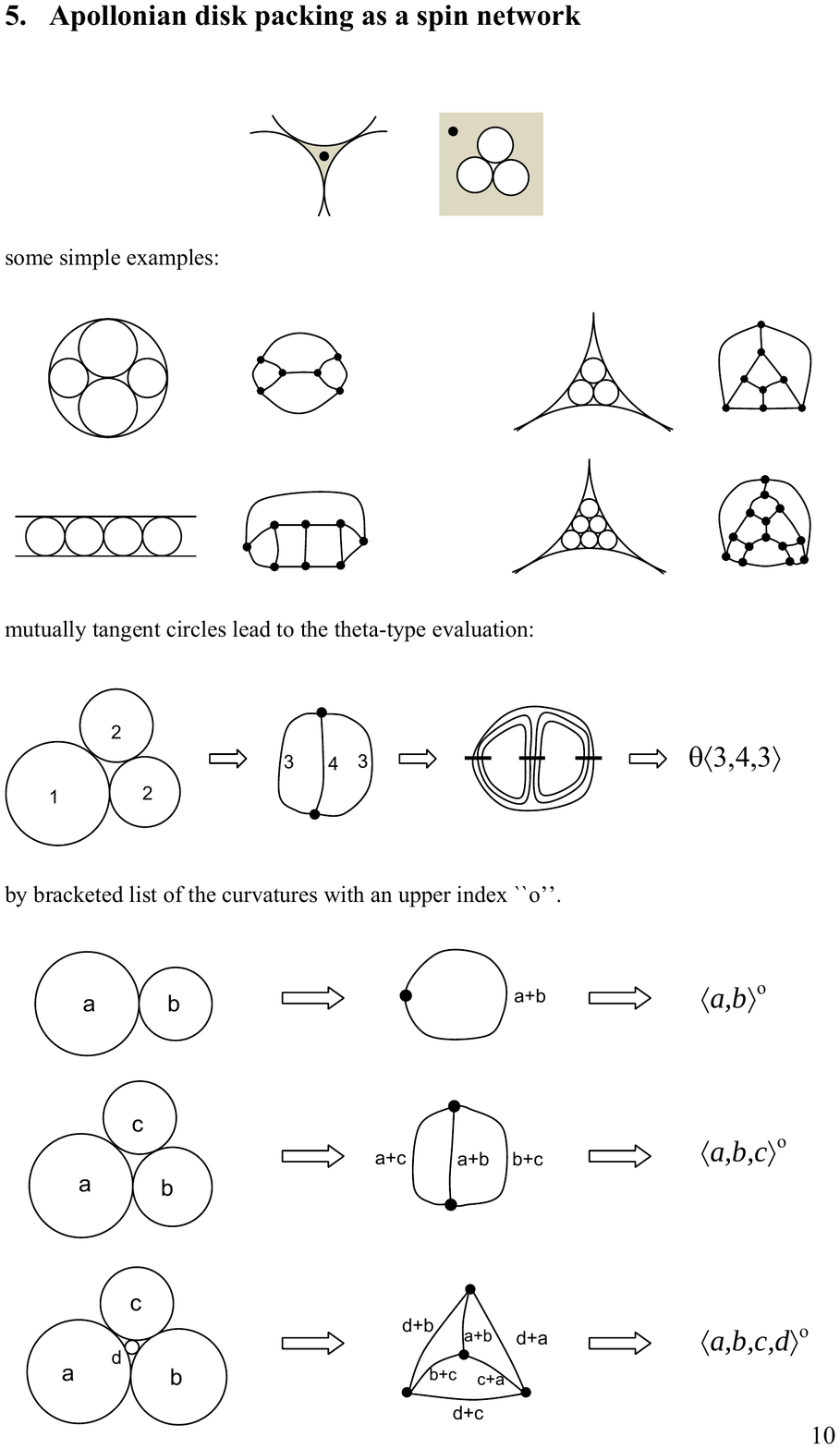}
$$
A more exotic example based on the Ford circle theorem is moved to the end of the section. 

\smallskip
The arrangements of circles lead thus to chromatic evaluations. For instance three mutually tangent circles lead to the theta-type evaluation: 
$$
\includegraphics[scale=.9]{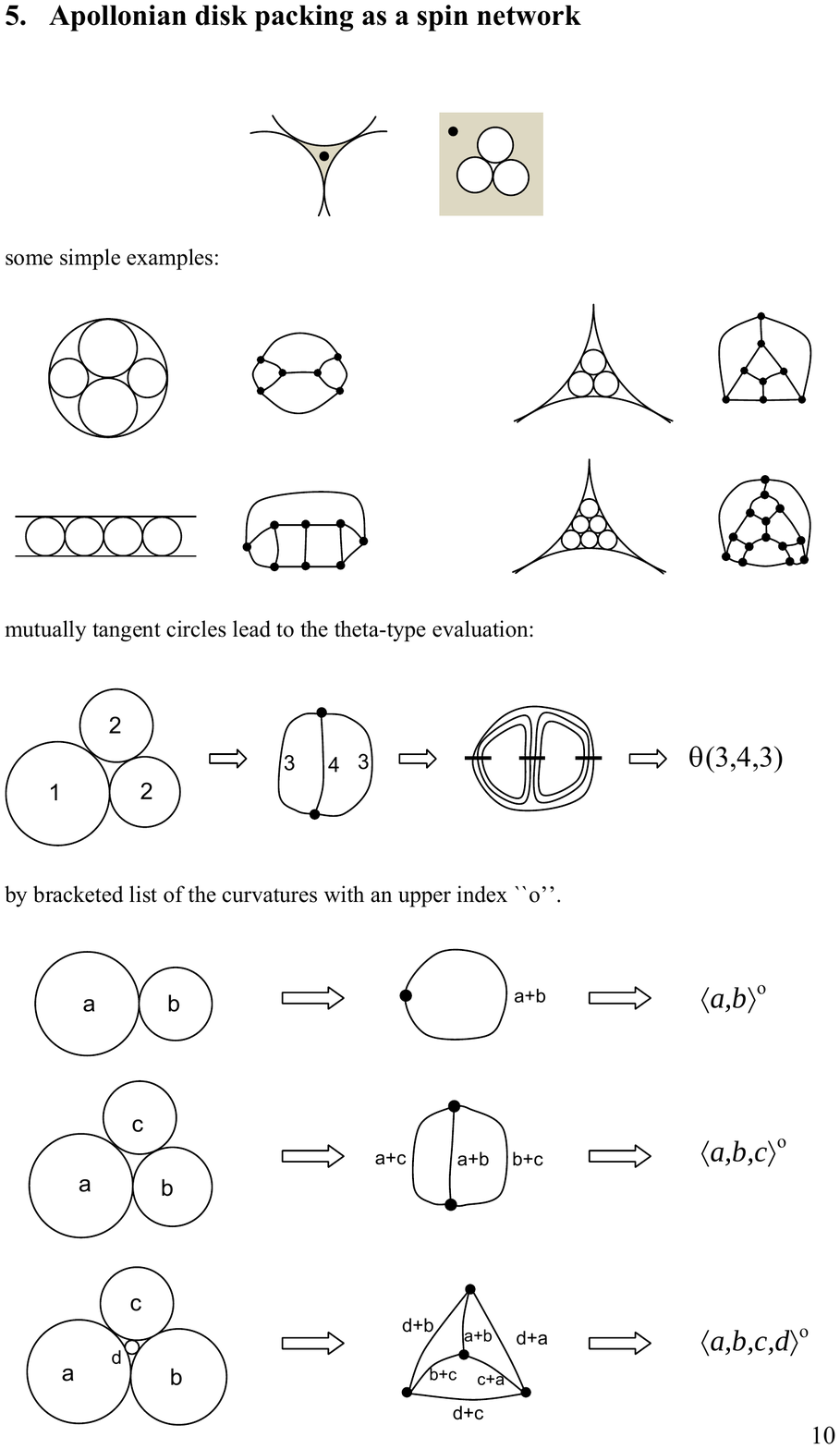}
$$
Here is the list of the three basic situations with small number of mutually tangent disks with curvatures labeled $a,b,c,d$.  
The evaluations $\langle G \rangle$ of the resulting nets will be denoted by bracketed list of the curvatures with an upper index ``o''. 
$$
\includegraphics[scale=.84]{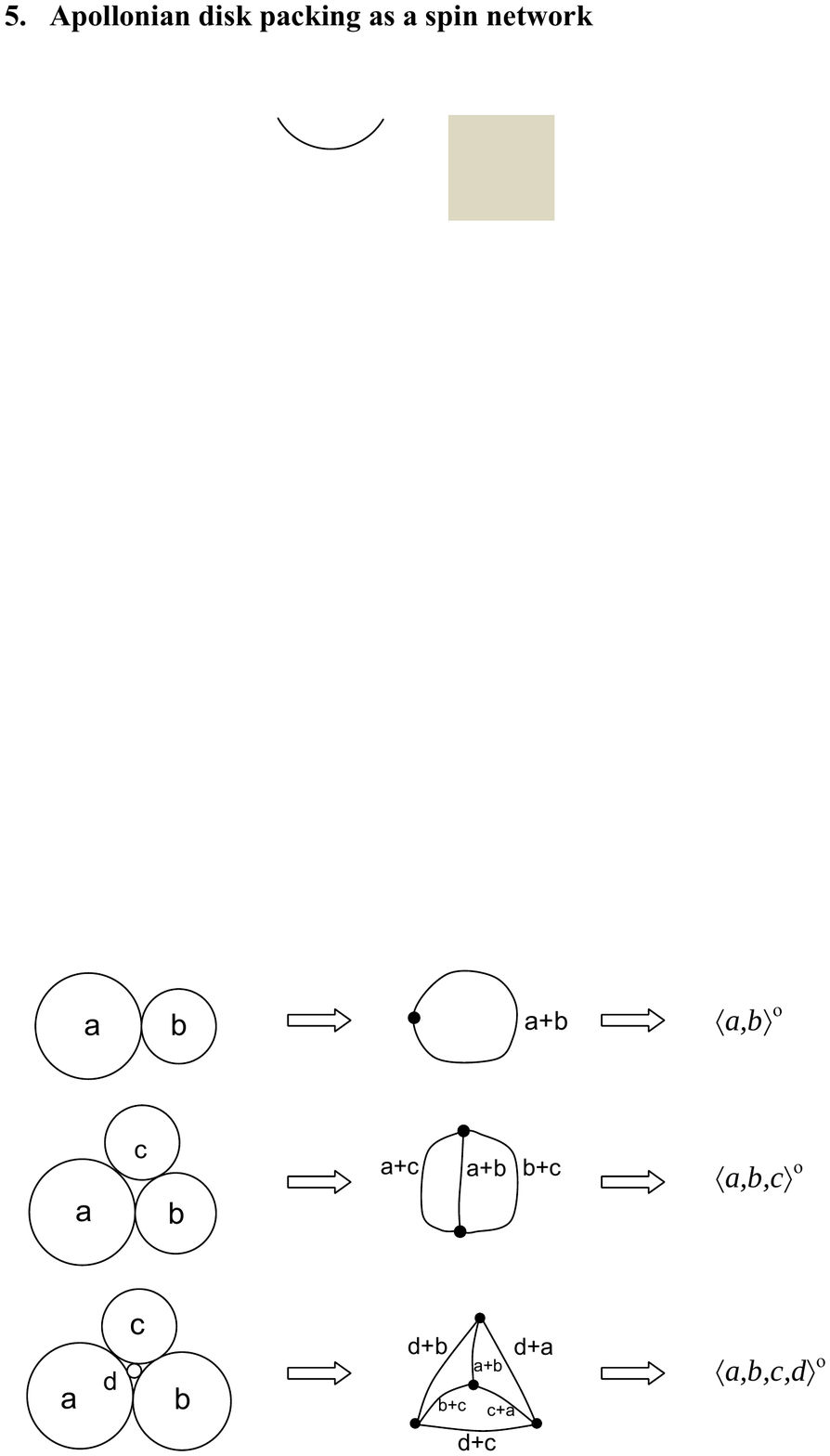}
$$
The last case is called Descartes configuration.

\medskip

\noindent { \bf Notation.}  The following notation is a convenient shortcut but also a way to reveal some symmetries:
\begin{equation}  
\label{eq:5.1}
                              \displaystyle {a+b\choose a,\,b } =
                              \dfrac{(a+b)!}{a!\,b!} 
\,, \qquad
                              \displaystyle {a+b+c\choose a,\,b,\,c } =
                              \dfrac{(a+b+c)!}{a!\,b!\,c!} 
 , \quad etc 
 \end{equation}  
%
\begin{proposition}
\label{thm:5.1}
The chromatic functions for the first three simplest disk arrangements may be expressed in the following way:

\

\noindent A. Two-circle configuration:
\begin{equation}  
\label{eq:5.2}
\left\langle a,b\right\rangle ^{o}  (-1)^{a+b} (a+b+1)
\end{equation}

\noindent B. Three-circle configuration:
\begin{equation}  
\label{eq:5.3}
\left\langle a,b,c\right\rangle ^{o}  
          = \; \dfrac{
                              a!\, b!\, c!\, (a+b+c+1)!
                              }
                              {
                              (a+b)!\, 
                              (b+c)!\, 
                              (c+a)! 
}
 \end{equation}

\noindent C. Four-circle configuration:
\begin{equation}  
\label{eq:5.4}
\left\langle a,b,c,d\right\rangle ^{o}  
          = \sum _{k=0}^{m} (-1)^{S+k} \; \dfrac{
                              \displaystyle {a\choose k} 
                              \displaystyle {b\choose k} 
                              \displaystyle {c\choose k} 
                              \displaystyle {d\choose k} 
                              \displaystyle {S+1\choose a,b,c,d,1} 
                              }
                              {
                              \displaystyle {a+b\choose a} 
                              \displaystyle {a+c\choose a} 
                              \displaystyle {a+d\choose a} 
                              \displaystyle {b+c\choose b} 
                              \displaystyle {b+d\choose b} 
                              \displaystyle {c+d\choose c} 
                              \displaystyle {S+1\choose k} 
}
 \end{equation}  
where $ S = a+b+c+d,  \quad m = {\rm min}\{a,b,c,d\}$.

\end{proposition}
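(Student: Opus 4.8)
The plan is to translate each disk configuration into its associated labeled graph via the dictionary of Section~\ref{sec:5} (a triangular region becomes a vertex, a tangency becomes an edge carrying the sum of the two curvatures) and then to read off its value from the master formulas of Proposition~\ref{thm:4.1}. Under this dictionary the two-circle arrangement collapses to a single loop, the three-circle arrangement becomes the theta-net evaluated by $\theta$, and the four mutually tangent circles of the Descartes configuration become the tetrahedral net $Tet$, whose six edges are labelled by the six pairwise sums of $a,b,c,d$. Thus (A), (B), (C) are nothing but $\Delta$, $\theta$, $Tet$ specialized to these labels.

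Part (A) is immediate: the loop carries label $a+b$, so its value is $\Delta(a+b)=(-1)^{a+b}(a+b+1)$ by \eqref{eq:4.1}(a). For part (B) the three edges of the theta-net carry $a+b$, $b+c$, $c+a$, so matching them to $p=i+j$, $q=j+k$, $r=k+i$ in \eqref{eq:4.1}(b) gives the unique solution $i=a$, $j=b$, $k=c$; substituting into the theta formula returns the stated rational expression $a!\,b!\,c!\,(a+b+c+1)!/\big[(a+b)!\,(b+c)!\,(c+a)!\big]$ (the overall sign $(-1)^{a+b+c}$ being absorbed into the $\langle\cdot\rangle^{o}$ convention).

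The substance is part (C). First I would fix the tetrahedral labelling: since formula \eqref{eq:5.4} is symmetric in $a,b,c,d$ any consistent assignment suffices, and I take opposite edges to pair as $(p,P)=(c+d,\,a+b)$, $(q,Q)=(b+d,\,a+c)$, $(r,R)=(b+c,\,a+d)$, so that each edge and its opposite together involve all four curvatures. The key combinatorial observation is then that the four vertex sums become $a_i\in\{S-a,\,S-b,\,S-c,\,S-d\}$, while all three quadrangle sums collapse to the single value $b_1=b_2=b_3=S$, where $S=a+b+c+d$. Consequently the summation range in \eqref{eq:4.1}(c) runs from $\max_i a_i=S-m$ to $\min_j b_j=S$ with $m=\min\{a,b,c,d\}$, and the natural substitution is $s=S-k$, $k=0,\dots,m$.

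The main obstacle — and the only genuinely technical step — is the bookkeeping that identifies the resulting expression with the target sum. Because all $b_j=S$, the product $\prod_{i,j}(b_i-a_j)!$ factors as $(a!\,b!\,c!\,d!)^3$, the denominator $p!q!r!P!Q!R!$ is the product of the six pairwise-sum factorials, and under $s=S-k$ the factors $(s-a_i)!$ become $(a-k)!,(b-k)!,(c-k)!,(d-k)!$ while each $(b_j-s)!$ becomes $k!$, contributing $(k!)^3$. It then remains to expand every binomial and multinomial on the right of \eqref{eq:5.4} — using $\binom{a}{k}=a!/[k!(a-k)!]$, $\binom{S+1}{a,b,c,d,1}=(S+1)!/(a!b!c!d!)$, the six factors $\binom{a+b}{a},\dots,\binom{c+d}{c}$ in the denominator, and $\binom{S+1}{k}$ — and to verify that, summand by summand in $k$, the two sides agree, sign $(-1)^{S+k}$ included. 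I expect this matching to close cleanly once the factorials are collected, the only point requiring care being the symmetric triple powers $(a!b!c!d!)^3$ and $(k!)^3$ that must appear identically on both sides.
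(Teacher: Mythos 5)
Your proposal follows essentially the same route as the paper's proof: identify the three configurations with $\Delta$, $\theta$, and $Tet$ from Proposition \ref{thm:4.1}, observe that all three $b_j$ collapse to $S$ while the four $a_i$ become $S-a,\dots,S-d$, reverse the summation via $s=S-k$, and then match factorials against the binomial form; your pairing of opposite edges in $Tet$ differs from the paper's only by a relabeling permitted by the symmetries in Corollary \ref{thm:4.3}. Your aside about the sign $(-1)^{a+b+c}$ in part (B) is well taken --- equation \eqref{eq:5.3} omits the factor that the equivalent formula \eqref{eq:5.8} retains --- but this is an inconsistency in the paper's statement rather than a gap in your argument.
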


\begin{proof}   
The first two formulas result by the following evaluations of (\ref{eq:4.1}a)  and (\ref{eq:4.1}b) of Proposition \ref{thm:4.1}:
$$  
\left\langle a,b\right\rangle ^{o}  = \Delta (a+b)  \,, \qquad
\left\langle a,b,c\right\rangle ^{o} = \theta (a\!+\!b,\,b\!+\!c,\,c\!+\!a)
$$

\noindent and simplification.  The Descartes configuration is somewhat involved.  We start with 
$$
      Tet\left[\begin{array}{ccc} {P} & {Q} & {R} \\ {p} & {q} & {r} \end{array}\right]
       =
    Tet\left[\begin{array}{ccc} {c\!+\!d}\;\;  & {a\!+\!d}\;\;  & {b\!+\!d} \\ {a\!+\!b}\;\;  & {b\!+\!c}\;\;  & {c\!+\!a} \end{array}\right]
$$
Referring to (\ref{eq:4.1}c), note that all $b_i$ terms are equal: 
$$
                   b_{1} =b_{2} =b_{3} =a+b+c+d\quad \Rightarrow \quad \min \{ b_{i} \} =a+b+c+d\equiv S
$$
where $S$ will denote the sum of all four curvatures, $S=a+b+c+d$.  
As to terms $a_i$, each corresponds to one of these
$$
      a_1 = S - a, \quad  a_2 = S - b, \quad   a_3 = S - c, \quad  a_4 = S - d\,,
$$
therefore
$$
        {\rm max} \{a_i\}   =  S - {\rm min} \{a,b,c,d\}
$$

\noindent 

\noindent The trick is to run the sum backwards.  Define $k = S - s$.  
The term in front of the sum in (\ref{eq:4.1}c)  becomes:

\begin{equation}  
\label{eq:5.5}
\frac{\prod _{i,j}(b_{i} -a_{j} )! }{p\, !q!r!P!Q!R!} \ = \ 
\frac{(a!)^{3} (b!)^{3} (c!)^{3} (d!)^{3} }{(a+b)\, !(b+c)!(c+a)!(d+a)!(d+b)!(d+c)!} 
\end{equation}  
The sum in (\ref{eq:4.1}c) becomes:
\begin{equation}  
\label{eq:5.6}
\sum _{s=\max \{ a_j \} }^{\min \{ b_i \} } \frac{(-1)^{s} (s+1)!}  {\prod _i (s-a_i )!\prod _j (b_j -s)!  } 
\ =\ 
\sum _{k=0}^{m} \frac{(-1)^{S-k} (S-k+1)!}   {(a-k)!(b-k)!(c-k)!(d-k)!\; \cdot \; k!k!k!} 
\end{equation}  
Putting these two terms together we get a ratio of products of various factorials.  
It takes a number of moves to transform it to the form given in the proposition. 
The reader may however easily check that \eqref{eq:5.5} resolves to the product 
of \eqref{eq:5.5} and \eqref{eq:5.6} by expanding all binomials, followed by a number of 
evident cancellations.  
\end{proof}

\noindent 
The evaluation of the Descartes configuration may be abbreviated:

\begin{equation}
\label{eq:5.7}
\left\langle a,b,c,d\right\rangle ^{o}  
             =    \sum _{k=0}^{m} (-1)^{S+k} \; 
\frac{  
           \displaystyle  {S+1 \choose a,b,c,d,1}   \prod _{i}  {x_{i}  \choose k} 
              }{
         \displaystyle {S+1 \choose k} \; \prod _{i<j}    { x_i +x_j  \choose  x_i ,\, x_j } 
               } 
\end{equation}

\noindent 
The above chromatic evaluations follow a certain pattern in progression from 2- to 3- to 4-circle configuration,  
which may be displayed by enforcing the binomial coefficients in the formulas as follows: 

\begin{proposition}[uniform notation]  
\label{thm:5.2}
The may be expressed in the following way

\noindent 
\begin{equation}  \label{eq:5.8}
\left\langle a,b\right\rangle ^{o}  
  =  (-1)^{a+b} \frac{\left(\begin{array}{c} {a+b+1} \\ {a,\; b,\; 1} \end{array} \right)}
                               {\left(\begin{array}{c} {a+b} \\ {a,\; b} \end{array}\right)} 
\end{equation}  
$$ 
\left\langle a,b,c\right\rangle ^{o}  
   =  (-1)^{a+b+c} \frac{\left(\begin{array}{c} {a+b+c+1} \\ {a,\; b,\; c,\; 1} \end{array}\right)}
                        {\left(\begin{array}{c} {a+b} \\ {a,\; b} \end{array}\right)
                          \left(\begin{array}{c} {b+c} \\ {b,\; c} \end{array}\right)
                          \left(\begin{array}{c} {c+a} \\ {c,\; a} \end{array}\right)} 
$$
$$
\left\langle a,b,c,d\right\rangle ^{o}  
          = \sum _{k=0}^{m} (-1)^{S+k} \; \dfrac{
                              \displaystyle {a\choose k} 
                              \displaystyle {b\choose k} 
                              \displaystyle {c\choose k} 
                              \displaystyle {d\choose k} 
                              \displaystyle {a+b+c+d+1\choose a,\,b,\,c,\,d,\,1} 
                              }
                              {
                              \displaystyle {a+b\choose a} 
                              \displaystyle {a+c\choose a} 
                              \displaystyle {a+d\choose a} 
                              \displaystyle {b+c\choose b} 
                              \displaystyle {b+d\choose b} 
                              \displaystyle {c+d\choose c} 
                              \displaystyle {S+1\choose k} 
}
$$
$ S = a+b+c+d, \quad  m = \min\{a,b,c,d\}$
\end{proposition}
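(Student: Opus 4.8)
The plan is to verify that each of the three displayed expressions, once the multinomial symbols are expanded via their definitions in \eqref{eq:5.1}, reduces to the corresponding formula already established in Proposition \ref{thm:5.1}. Because Proposition \ref{thm:5.1} is available, the entire task is a routine algebraic identity check: no new structural fact about spin networks or about the chromatic evaluation enters. I would treat the three cases in turn, in increasing order of complexity.

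First, for the two-circle configuration I would expand the two symbols as
$$
\binom{a+b+1}{a,\,b,\,1} = \frac{(a+b+1)!}{a!\,b!}\,, \qquad
\binom{a+b}{a,\,b} = \frac{(a+b)!}{a!\,b!}\,,
$$
so that their quotient collapses to $(a+b+1)$ and the prefactor $(-1)^{a+b}$ survives unchanged. This is exactly $(-1)^{a+b}(a+b+1)$, the value recorded in \eqref{eq:5.2}.

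Next, for the three-circle configuration I would expand the numerator as $\binom{a+b+c+1}{a,\,b,\,c,\,1} = (a+b+c+1)!/(a!\,b!\,c!)$ and each of the three denominator symbols as $\binom{a+b}{a,\,b} = (a+b)!/(a!\,b!)$, cyclically. The product of the three denominators contributes $(a!)^{2}(b!)^{2}(c!)^{2}$ against $a!\,b!\,c!$ in the numerator, and after cancellation one is left with $a!\,b!\,c!\,(a+b+c+1)!/[(a+b)!\,(b+c)!\,(c+a)!]$, matching \eqref{eq:5.3}. At this step I would also note that the correct overall sign is $(-1)^{a+b+c}$: it comes directly from $\theta(a+b,\,b+c,\,c+a)$ in Proposition \ref{thm:4.1}(b) with the internal labels $i=a$, $j=b$, $k=c$, so the sign written in the uniform formula is the one genuinely inherited from the theta evaluation.

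Finally, the four-circle formula requires no work at all, since it is literally identical to \eqref{eq:5.4} of Proposition \ref{thm:5.1}: the numerator multinomial $\binom{S+1}{a,\,b,\,c,\,d,\,1}$, the four symbols $\binom{a}{k},\dots,\binom{d}{k}$, the six pairwise binomials, and the factor $\binom{S+1}{k}$ all already appear there, with $S+1$ and $a+b+c+d+1$ being the same quantity. The only genuine obstacle anywhere in the argument is bookkeeping in the three-circle case — tracking exactly which factorials cancel and confirming that the sign convention carried over from Proposition \ref{thm:4.1} agrees with the $(-1)^{a+b+c}$ displayed in the uniform formula — and this is entirely mechanical.
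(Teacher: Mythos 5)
Your proposal is correct and coincides with the paper's (implicitly understood) justification: Proposition \ref{thm:5.2} is offered purely as a notational repackaging of Proposition \ref{thm:5.1}, verified by expanding the multinomial symbols via \eqref{eq:5.1} and cancelling factorials, exactly as you do. Your remark on the sign in the three-circle case is also apt --- the $(-1)^{a+b+c}$ in the uniform formula is indeed the sign inherited from $\theta(a+b,b+c,c+a)$ in Proposition \ref{thm:4.1}(b) with $i=a$, $j=b$, $k=c$, which the printed form of \eqref{eq:5.3} happens to omit.
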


\noindent 
Now the pattern of progression becomes evident.

\

\begin{corollary}  
The ratios of the consecutive evaluations are: 
\begin{equation}  \label{eq:5.9}
\frac{\left\langle a,b,c\right\rangle ^{o} }{\left\langle a,b\right\rangle ^{o} } 
= 
(-1)^c\;
\frac{       \displaystyle {1+a+b+c\choose c} 
                  }{
                              \displaystyle {a+c\choose c} 
                              \displaystyle {b+c\choose c} 
                   }
\end{equation}

$$
\frac{  \left\langle a,b,c,d\right\rangle ^{o} }
       {  \left\langle a,b,c\right\rangle ^{o} } 
 =  
\sum _{k=0}^{m} (-1)^{d+k} \; \frac{
                              \displaystyle {a\choose k} 
                              \displaystyle {b\choose k} 
                              \displaystyle {c\choose k} 
                               \displaystyle {1+a+b+c+d-k\choose d-k} 
                  }{
                              \displaystyle {a+d\choose d} 
                              \displaystyle {b+d\choose d} 
                              \displaystyle {c+d\choose d} }
$$  
\end{corollary}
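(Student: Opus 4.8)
The plan is to derive both identities directly from the closed-form expressions furnished by Proposition \ref{thm:5.2}, in which $\langle a,b\rangle^{o}$, $\langle a,b,c\rangle^{o}$ and $\langle a,b,c,d\rangle^{o}$ are all given explicitly. No new combinatorial input is required; the entire content is the cancellation of factorials together with careful bookkeeping of signs, so this really is a corollary obtained by division.

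For the first ratio I would substitute the uniform-notation forms of $\langle a,b,c\rangle^{o}$ and $\langle a,b\rangle^{o}$ and cancel the common factor $\binom{a+b}{a,b}$, which appears in the denominator of the former and can be matched against the latter. The signs combine as $(-1)^{a+b+c}/(-1)^{a+b}=(-1)^{c}$, producing the claimed prefactor. Rewriting the surviving multinomial coefficients in factorials collapses the quotient to $\dfrac{(a+b+c+1)!\,a!\,b!\,c!}{(a+b+1)!\,(b+c)!\,(c+a)!}$, and an independent factorial expansion of the target right-hand side $\binom{1+a+b+c}{c}\big/\!\bigl(\binom{a+c}{c}\binom{b+c}{c}\bigr)$ yields exactly the same ratio. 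Equating the two closes this case.

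For the second ratio the key observation is that $\langle a,b,c\rangle^{o}$ carries no dependence on the summation index $k$, so it may be pushed inside the sum and applied term by term to $\langle a,b,c,d\rangle^{o}$. In each term the factors $\binom{a+b}{a}$, $\binom{a+c}{a}$, $\binom{b+c}{b}$ in the denominator of $\langle a,b,c,d\rangle^{o}$ cancel against the coefficients $\binom{a+b}{a,b}$, $\binom{c+a}{c,a}$, $\binom{b+c}{b,c}$ of $\langle a,b,c\rangle^{o}$, while $\binom{a}{k}\binom{b}{k}\binom{c}{k}$ and the three denominators $\binom{a+d}{d}\binom{b+d}{d}\binom{c+d}{d}$ already agree with the target; the signs reduce via $(-1)^{S+k}/(-1)^{a+b+c}=(-1)^{d+k}$. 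What survives is the single term-by-term identity, which I would isolate and verify directly in factorials:
\[
\frac{\binom{d}{k}\,\binom{S+1}{a,b,c,d,1}}{\binom{S+1}{k}\,\binom{a+b+c+1}{a,b,c,1}}
=\binom{1+a+b+c+d-k}{d-k},\qquad S=a+b+c+d.
\]
Expanding both sides shows each equals $(S+1-k)!\big/\!\bigl((d-k)!\,(a+b+c+1)!\bigr)$, using $S+1-k-(d-k)=a+b+c+1$; this is the crux of the computation.

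The main obstacle is organizational rather than conceptual: pairing the many multinomial coefficients correctly during cancellation, and confirming that the residual factor in the four-circle case collapses precisely to the single binomial $\binom{1+a+b+c+d-k}{d-k}$. Once the displayed term-by-term identity is established, both formulas of the corollary follow immediately from Proposition \ref{thm:5.2}.
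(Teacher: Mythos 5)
Your proposal is correct and follows exactly the route the paper intends: the corollary is stated without an explicit proof precisely because it is meant to fall out of dividing the uniform-notation formulas of Proposition \ref{thm:5.2}, which is what you do. Your sign bookkeeping, the cancellation of the shared multinomial coefficients, and the term-by-term identity $\binom{d}{k}\binom{S+1}{a,b,c,d,1}\big/\bigl(\binom{S+1}{k}\binom{a+b+c+1}{a,b,c,1}\bigr)=\binom{1+S-k}{d-k}$ all check out.
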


\begin{corollary}
\label{thm:5.4}
The inclusion of a circle of curvature $d$ in the ideal triangle determined by circles 
of curvature $a,b,c$ in a disk packing changes its chromatic evaluation by the factor defined by
the second equation of \eqref{eq:5.9}.
\end{corollary}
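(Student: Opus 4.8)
The plan is to read the second equation of \eqref{eq:5.9} geometrically, using the dictionary between tangent-circle configurations and spin networks established at the start of this section. The statement is in essence a geometric repackaging of the preceding Corollary, so the real work is to confirm that the geometric act of inscribing a circle corresponds precisely to the algebraic passage $\langle a,b,c\rangle^o \to \langle a,b,c,d\rangle^o$; once that correspondence is in place, the formula is immediate.

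First I would recall the construction rules: a vertex is assigned to each ideal (curvilinear) triangle and an edge, labeled by the sum of the two curvatures it joins, to each tangency point. For three mutually tangent circles of curvatures $a,b,c$ this yields the theta-net with edge labels $a+b$, $b+c$, $c+a$, whose chromatic value is $\langle a,b,c\rangle^o = \theta(a+b,b+c,c+a)$ by Proposition~\ref{thm:5.1} (case B).

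Next I would inscribe the circle of curvature $d$ in the ideal triangle cut out by the circles $a,b,c$. By construction this circle is tangent to each of the three, so three new tangency points appear; under the dictionary they become three new edges carrying the labels $a+d$, $b+d$, $c+d$, while the original three edges $a+b$, $b+c$, $c+a$ persist. The resulting labeled graph is exactly the tetrahedral net
$$
Tet\left[\begin{array}{ccc} {c\!+\!d} & {a\!+\!d} & {b\!+\!d} \\ {a\!+\!b} & {b\!+\!c} & {c\!+\!a} \end{array}\right],
$$
which is the Descartes configuration evaluated in Proposition~\ref{thm:5.1} (case C); hence its chromatic value is $\langle a,b,c,d\rangle^o$.

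Finally, since the insertion sends the evaluation from $\langle a,b,c\rangle^o$ to $\langle a,b,c,d\rangle^o$, the multiplicative change is their quotient, and by the second equation of \eqref{eq:5.9} this quotient equals the stated sum. The main obstacle is the middle step: one must verify carefully that the geometric insertion reproduces exactly this labeled tetrahedron---that the three fresh edges acquire the labels $a+d$, $b+d$, $c+d$ and that the ideal-triangle vertices of the enlarged configuration match those of the tetrahedral net---rather than some differently connected or differently labeled graph. Once this identification is secured, the corollary follows at once from the already-established ratio formula.
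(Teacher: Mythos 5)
Your proposal is correct and follows essentially the same route as the paper, which treats this corollary as an immediate consequence of the second equation of \eqref{eq:5.9} once the dictionary (ideal triangle $\to$ vertex, tangency $\to$ edge labeled by the sum of curvatures) identifies the three-circle configuration with the theta-net and the four-circle Descartes configuration with the Tet-net of Proposition \ref{thm:5.1}. Your explicit check that the inserted circle contributes the three new edge labels $a+d$, $b+d$, $c+d$ is exactly the identification the paper relies on implicitly.
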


The future work in chromatic evaluations of disk configurations will include the asymptotic behavior of integral disk packing
in the sense of recursive accumulation of disks in a particular Apollonian disk packing. Potential application 
in the budding quantum gravity is one of the motivations. 

\newpage
\noindent 
{\bf Example:}  
Ford disk arrangement \cite{Ford} is an arrangement of disks tangent to the real axis with the famous property of ``counting'' the rational numbers.  
It consists of discs tangent to the real line, drawn at every $p/q \in \mathbb Q$  with radius $1/q^2$. 
If two discs at $p/q$ and $p'/q'$ are tangent, there is a third one generated in the ideal triangle formed by them and the real axis. 
It is tangent at the fraction that is the Farey sum:
\begin{equation}  \label{eq:}
\frac{p}{q} \oplus \frac{p'}{q'} =\frac{p+p'}{q+q'} 
\end{equation}  

~

\includegraphics[scale=.9]{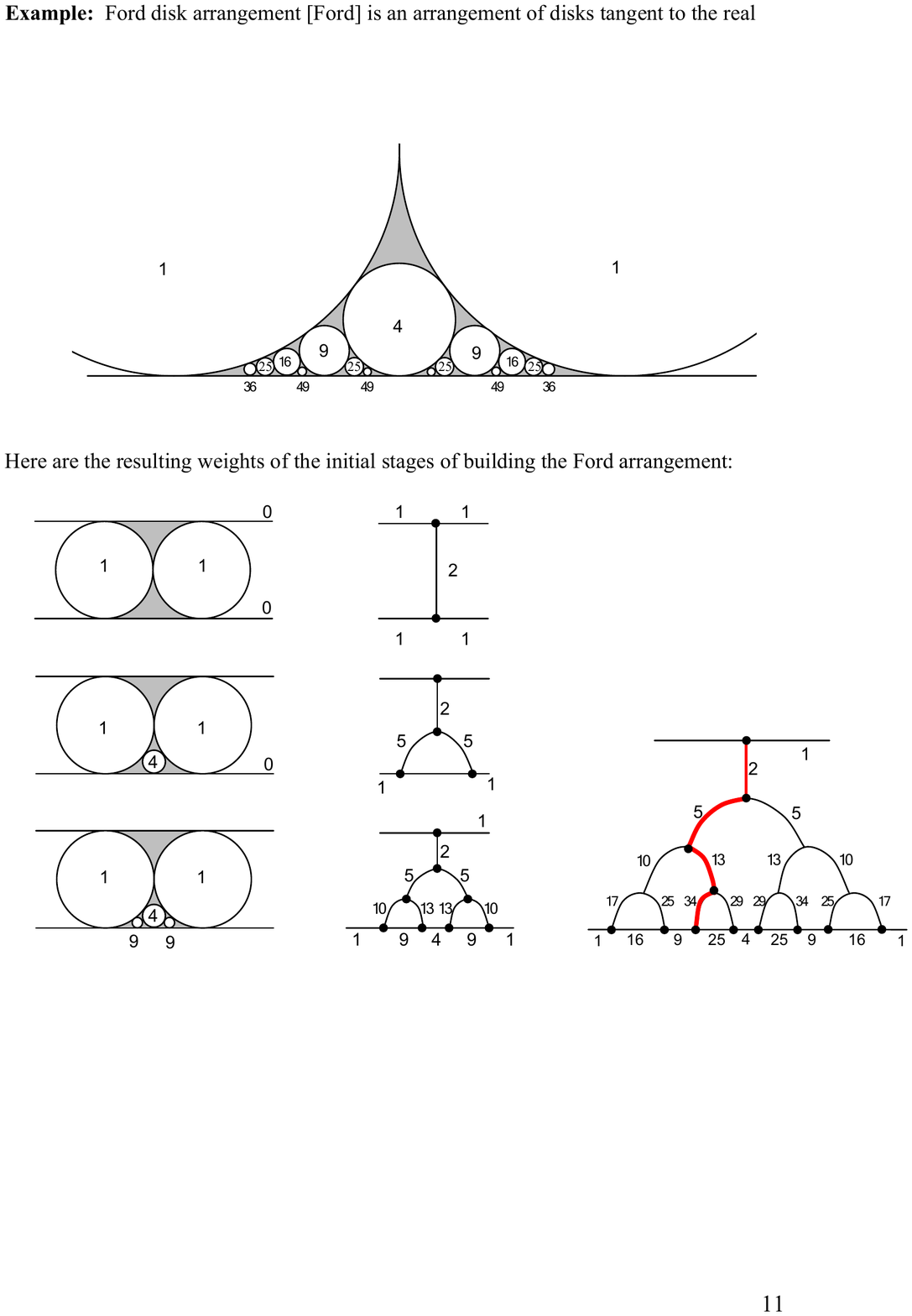}

\

\noindent 
Here are the resulting weights of the initial stages of building the Ford arrangement: 
$$
\includegraphics[scale=.9]{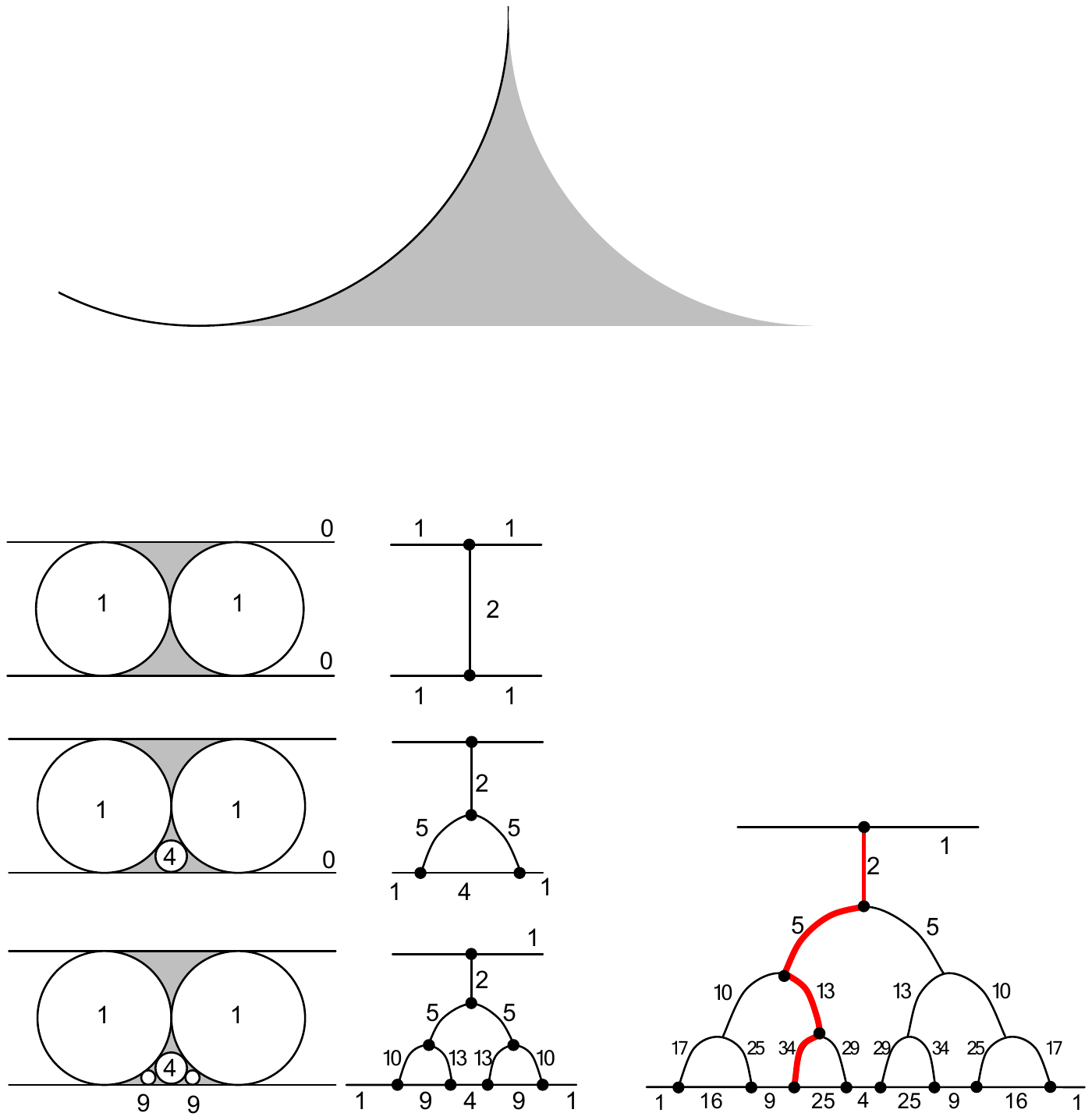}
$$
The spin networks have many number-theoretic features.  For starter, the dark zigzag is made of every other Fibonacci number (bold)
$$
            \mathbf 1, \ 1,\  \mathbf 2,\ 3,\  \mathbf 5,\ 8,\  \mathbf {13},\ 21,\  \mathbf {34},\ 55,\  \mathbf {89},\ \dots 
$$
The bottom labels are squares.  The edges along the sides, (2,5,10,17,\dots .) are squares increased by 1.

\subsection{Chromatic evaluations: From combinatorial to analytic}

One may find somewhat baffling puzzling and disturbing the fact that the chromatic values of nets jump between opposite signs 
when a label increases by a unit.    
For instance, the delta function for two tangent circles is $\Delta(a,b) = (-1)^{a+b}(a+b+1)$.
Setting $a=1$ and varying the other circle, we get:
$$
\Delta(1,1) = +3,\quad
\Delta(1,2) = -4,\quad
\Delta(1,3) = +5,\quad
...  etc.
$$
One could hope for a more ``steady'' behavior.

In the case of circles, it makes sense to consider circles of non-integer curvatures.
And, luckily, this brings the solution to the conundrum.
The arguments of chromatic functions \eqref{eq:5.2}--\eqref{eq:5.4}
have natural extensions beyond $\mathbb Z$
but the values are in the field of complex numbers.
Thus, e.g. $\Delta (1,1)=+3$ moves to $\Delta(1,2)=-4$
along a path in the Argand plane, 
see Figure \ref{Argand}.

\begin{figure}
\centering
\includegraphics[scale=.5]{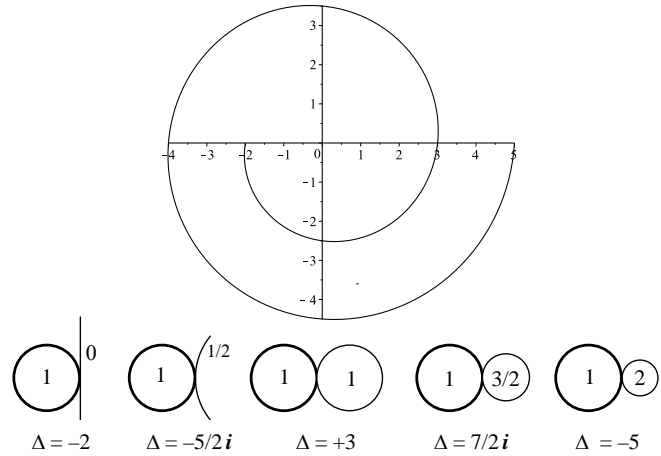}
\caption{The chromatic functions assume complex numbers for non-integer curvatures}
\label{Argand}
\end{figure}

%

The situation is explained by the fact that the delta function may be written as 
a complex valued function on a real plane:
$$
\Delta(x,y) = (1+x+y)\, {\rm e}^{(x+y)\pi i}
$$
The spiral in Figure \ref{Argand} is thus 
$\Delta(1,x) = (2+x)\, {\rm e}^{(1+x)\pi i}$.

Quite interestingly, the other two functions, since composed of binomials, also admit such extensions.
This we have a situation: 
$$
\begin{array}{rll}
\hbox{combinatorics:} & \mathbb Z^n &\quad \longrightarrow \quad \mathbb Q \\
\hbox{analysis:}         & \mathbb R^n &\quad \longrightarrow \quad \mathbb C
\end{array}
$$

The behavior of these functions may be found intriguing.
For instance, consider the three-circle configurations:
$$
\theta: \mathbb R^3 \  \longrightarrow \ \mathbb C
$$
To visualize this function, we look for the image of lines.
They may differ considerably as the directions of the lines change.
The image a straight path $(1,1,x)$ in $\mathbb R^3$
is an ever-growing outward spiral (Figure \ref{fig:three}, right).
The image of the straight line $(1,x,x)$ (two circles changing size simultaneously)
is bounded: it is a spiral that approaches a circle as its attractor from inside
(Figure \ref{fig:three}, center).
The image of the diagonal line $(x,x,x)$ (three circles changing size simultaneously)
is an inward spiral towards 0
(Figure \ref{fig:three}, left).
See Figure \ref{fig:interesting} for other examples.

\begin{figure}[h]
$$
\begin{array}{ccc}
\includegraphics[scale=.37]{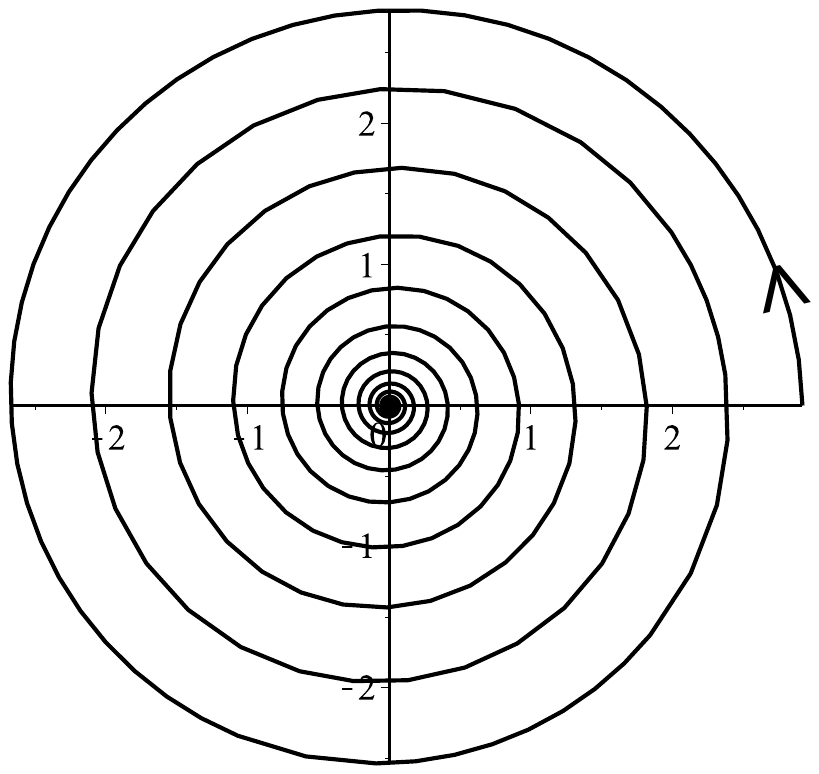}
&\qquad
\includegraphics[scale=.35]{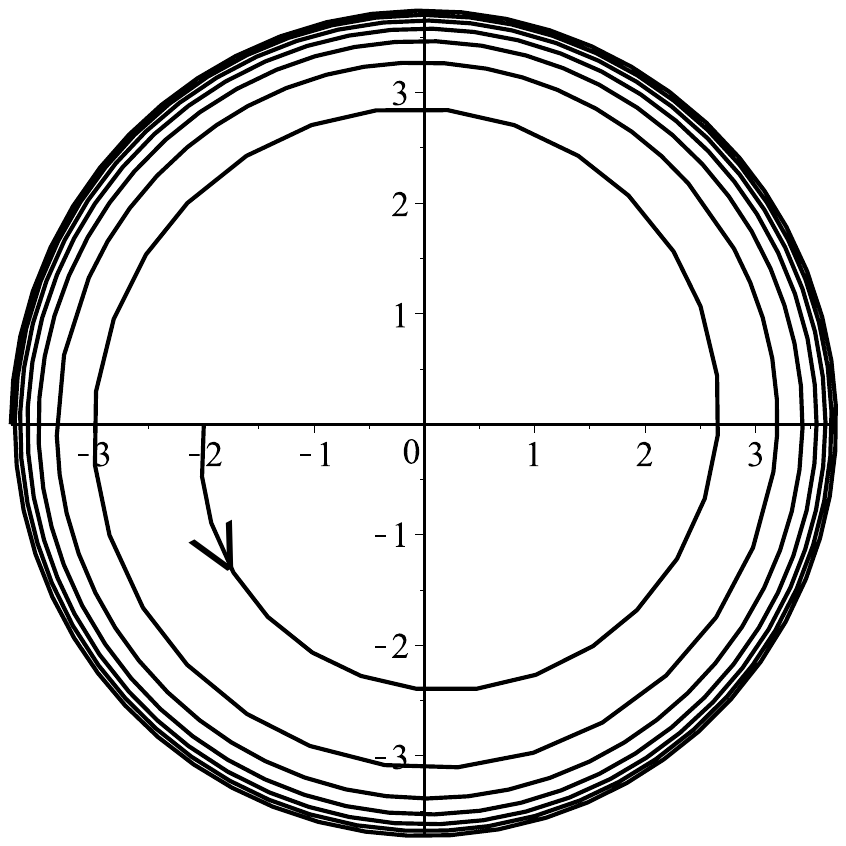}
&\includegraphics[scale=.35]{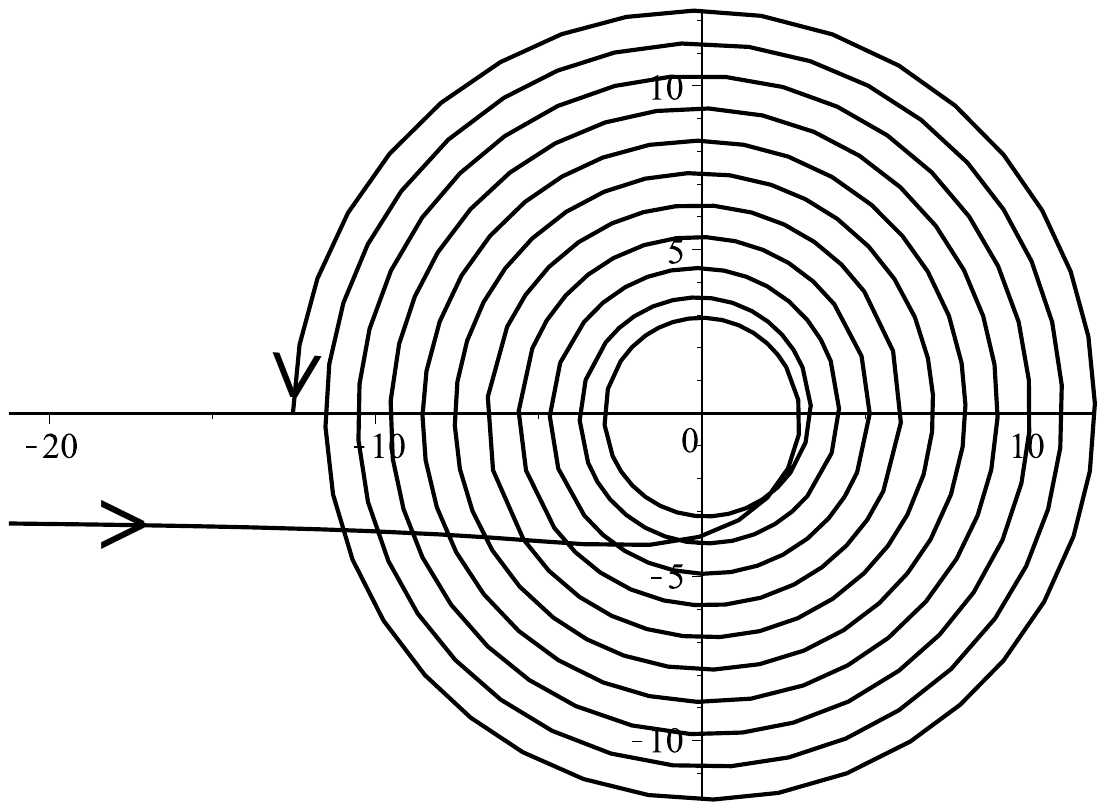}\\
\theta(x,x,x)
&\qquad
\theta(1,x,x)
&\qquad
\theta(1,1,x)
\\
\hbox{\small Sink}
&\qquad
\hbox{\small Atractor}
&\qquad
\hbox{\small Steady growth}
\end{array}
$$
\caption{Three behaviors of $\theta$ along a line.}
\label{fig:three}
\end{figure}

These complex-valued functions and their potential significance
for spin networks and quantum gravity remain to be studied further.

\begin{figure}[h]
$$
\begin{array}{ccc}
\includegraphics[scale=.5]{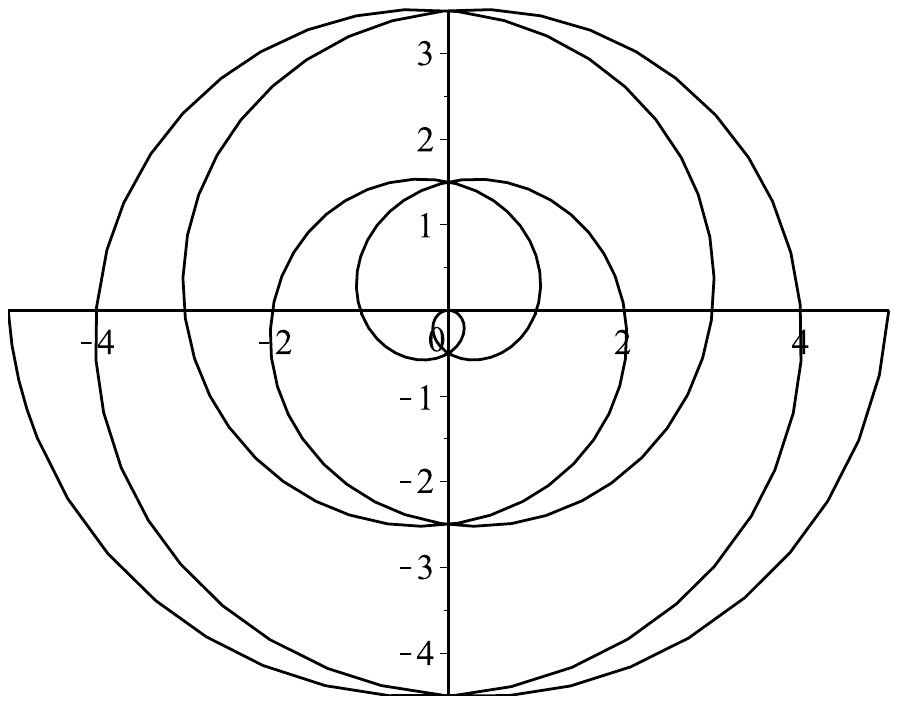}
&\qquad
\includegraphics[scale=.5]{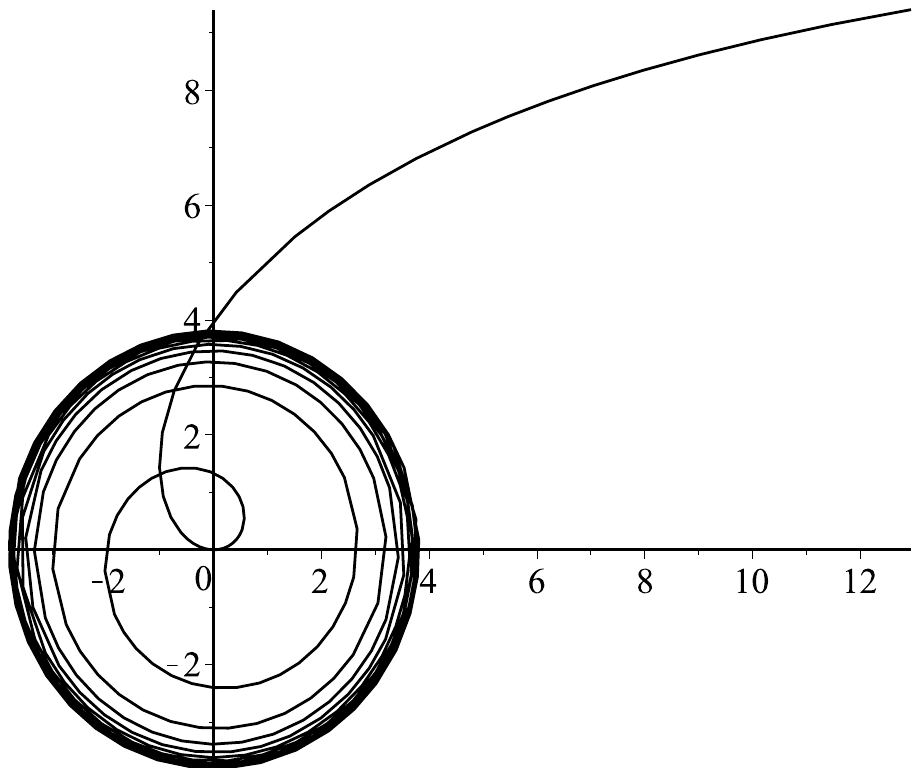}\\
\Delta(1,x)
&\qquad
\theta(1,x,x)\\
x=-3...7
&\qquad
x=-.9... 11
\end{array}
$$
\caption{Examples extended}
\label{fig:interesting}
\end{figure}


\newpage

\newpage


\begin{thebibliography}{00}

\bibitem{ba}
J.H. Barrett,  Skein spaces and spin structures, {\it Math. Proc. Camb. Phil. Soc.} {\bf 126} (1999)  267.

\bibitem{BS} D.M. Brink and G.R. Satchler,
{\it Angular Momentum}, 2nd edn, Oxford University Press, 1968).

%


\bibitem{Ford} 
Laster R. Ford, 
Fractions,
{\it The American Mathematical Monthly}  {\bf 45} (1938) 2586--601.


\bibitem{Ka} Louis H. Kauffman,
{\it  Knots and Physics (Knots and Everything)}, (World Scientific, 3 edition, 2001).

\bibitem{KLo} 
Louis H. Kauffman and Samuel J. Lomonaco, 
Spin Networks and Quantum Computation, 
{\it Bulg. J. Phys.}  {\bf 35} (2008) 241--256.

\bibitem{KL}  Louis H. Kauffman and S\'ostenes L. Lins, 
{\it Temperley-Lieb recoupling theory and invariants of 3-manifolds},
(Annals of Mathematics Studies, vol. 134, Princeton University Press, Princeton, 1994).


\bibitem{Lev}   I. B. Levinson, 
Sums of Wigner coefficients and their graphical representation,
{\it Proceed. Physical-Technical Inst. Acad. Sci. Lithuanian SSR},  {\bf 2} (1956) 17-30.


\bibitem{Lev1} 	J.N. Levinson 
Tr. Fiz.-Tekh. Inst., Ashkhabad, {\bf 2}, (1957) 31.  

\bibitem{Lev2} 	J.N. Levinson {\it Liet. TSR Mosklu Akad.  Darb.}  {\bf B4} (1957)  3.

\bibitem{Ma} 	Seth A. Major, A Spin Network Primer, {\it Am.J.Phys.}  {\bf 67} (1999) 972-980.


\bibitem{Pe} 	Roger Penrose,  Angular momentum: an approach to combinatorial space-time,
in {\it Quantum Theory and Beyond}, ed. T. Batin (Cambridge University Press, 1971), pp. 151--180.

\bibitem{Ro} 	Carlo Rovelli, {\it Quantum Gravity}, (Cambridge University Press, 2004). 

\bibitem{YB} 	A.P. Yutsis and A.A. Bandziaitis, {\it Quantum Theory of Angular Momentum} (Mintus, Vilnius, 1965).

\bibitem{YLV} 	A.P. Yutsis, J.B. Levinson, V.V. Vanagas, 
Mathematical Apparatus of the Theory of Angular Momentum, 
{\it Israeli Program for Scientific Translation}  (Jerusalem, 1962).

\end{thebibliography}
\end{document}